\newtheorem{theorem}{Theorem}
\newtheorem{lemma}[theorem]{Lemma}
\newtheorem{claim}{Claim}
\newtheorem{corollary}[theorem]{Corollary}
\begin{document}

\title{On Achievability of an $(r,l)$ Fractional Linear Network Code}

\author{\IEEEauthorblockN{Niladri Das and Brijesh Kumar Rai}
\IEEEauthorblockA{Department of Electronics and Electrical Engineering, Indian Institute of Technology Guwahati, Guwahati, Assam, India\\
Email: \{d.niladri,bkrai\}@iitg.ernet.in}}

\maketitle

\begin{abstract}
It is known that there exists a network, called as the M-network, which is not scalar linearly solvable but has a vector linear solution for message dimension two. Recently, a generalization of this result has been presented where it has been shown that for any integer $m\geq 2$, there exists a network which has a $(m,m)$ vector linear solution, but does not have a $(w,w)$ vector linear solution for $w<m$. This paper presents a further generalization. Specifically, we show that for any positive integers $k,n,$ and $m\geq 2$, there exists a network  which has a $(mk,mn)$ fractional linear solution, but does not have a $(wk,wn)$ fractional linear solution for $w<m$.
\end{abstract}

\IEEEpeerreviewmaketitle

\section{Introduction}
The concept of network coding emerged in year 2000 where it was shown that by performing operations on incoming packets and then forwarding by nodes in a network may increase the throughput significantly\cite{alshwede}. Since then there have been extensive studies on various theoretical aspects and applications of network coding. The benefits of network coding can be best seen in a multicast network where the  min-cut bound can be achieved using network coding which otherwise may be impossible through only routing \cite{alshwede}.

Linear operations by the nodes is the most natural setting to consider due to the mathematical traceability of linear operations. The network coding involving only linear operations over a finite field is referred as the linear network coding. An $(r,l)$ fractional linear network code considers the information generated by the sources as $r$ length vectors of symbols and the information passing through all the edges as $l$ length vectors of symbols. A network is said to have an $(r,l)$ fractional linear network code solution if the sources can supply $r$ symbols to their respective terminals in $l$ uses of the network. If $r = l$, then an $(r,l)$ fractional linear network code is referred to as an $(r,r)$ vector linear network code (also as an $r$-dimensional vector linear code). An $1$-dimensional vector linear network code is referred to as a scalar linear network code. An $(r,l)$ factional linear network code for a network is achievable if the network has an $(r,l)$ fractional linear network code solution. A rate $r/l$ linear network code is said to be achievable if there exists an $(mr,ml)$ factional linear network code solution over some finite field where $m$ is a positive rational number such that $mr$ and $ml$ are integers. 

In this paper, we show that achievability of an arbitrary rate $r/l$ linear network code does not necessarily imply the existence of an $(r,l)$ fractional linear network code solution for any positive integers $r$ and $l$. Specifically, we show that for any integer $m\geq 2$, there exists a network where a rate $r/l$ linear network code is achievable but it does not have a $(wr,wl)$ fractional linear network code solution for any $w$ less than $m$. For $r=l$, the result of this paper specializes to the result of \cite{das}.

\subsection{Related Work}
In this subsection, we provide a brief survey of the studies related to the work of this paper. Scalar linear network codes over sufficiently large finite fields have been shown to be sufficient to achieve the capacity for  multicast networks \cite{li}. It has also been shown that such scalar linear codes can be efficiently designed \cite{jaggi}. Moreover, capacity achieving vector linear network codes can also be designed where the requirement of field size can be reduced as compared to capacity achieving scalar linear network codes\cite{ebra}. It has been shown in \cite{riis} and \cite{koetter} that all multicast networks are vector linearly solvable over the binary field $\mathbb{F}_2$ if the vector length is sufficiently large. For the multicast networks, it is also known that solution over an alphabet might not guarantee a solution over all larger alphabets \cite{doug2}. Although multicast networks have been extensively studied since the advent of network coding and there are many nice results such as characterising the capacity, dependency on the size of the alphabet etc., yet not everything is known and new results such as \cite{sun} continue to enhance the knowledge about multicast networks. Sun \textit{et al.} \cite{sun} presented an instance of a multicast network which has an $L$-dimensional vector linear network code solution over $\mathbb{F}_q$ but has no scalar linear network code solution over $\mathbb{F}_{q^\prime}$ where $q^\prime \leq q^L$. They also presented an explicit instance to show that the existence of a vector linear network code solution for a certain message dimension does not necessarily guarantee a vector linear network code solution for all higher message dimensions.

For non-multicast networks, the results are fundamentally different from multicast networks. Linear network codes are shown to be insufficient \cite{doug}. Among linearly solvable networks, there are networks which are vector linearly solvable only for certain vector dimensions. In \cite{medard}, it has been shown that there exists a network which has a 2-dimensional vector linear network code solution but has no scalar linear network code solution. This network is known as the M-network. In fact, the M-network has no vector linear solution for any odd message dimensions \cite{dougherty}. The results of \cite{medard} and \cite{dougherty} have been generalized in \cite{das}. In \cite{das}, it has been shown that for any integer $m\geq 2$, there exists a network which has an $m$-dimensional vector linear solution but has no $w$-dimensional vector linear solution for $w<m$.

Matroid theory, a branch of mathematics, has been shown to be useful to study the linear solvability of networks. Indeed, some of the important results, such as \cite{dougherty,doug,doug2,doug3,doug4,doug5,sun}, have been obtained by exploiting the connection between matroids and networks. It has been shown that a network has a scalar linear network code solution over $\mathbb{F}_q$ if and only if it is a matroidal network with respect to a matroid representable over $\mathbb{F}_q$ \cite{dougherty,kim}. Using this result the authors of \cite{doug} constructed the Fano network from the Fano matroid to show that there exists a network which has a vector linear solution for any message dimension over any finite field of even characteristics but has no vector linear solution for any message dimension when the characteristics of the finite field is odd. Similarly, from the the non-Fano matroid the non-Fano network was constructed to show that there exists a network which has a vector linear solution for any message dimension over a finite field of odd characteristics but has no vector linear solution for any message dimension when the characteristics of the finite field is even. In \cite{dougherty} the V{\'{a}}mos network was constructed from the V{\'{a}}mos matroid and it was shown that Shannon-type inequalities are insufficient to compute the network coding capacity of the V{\'{a}}mos network. \cite{dougherty} was generalized in \cite{sundar} to show that similar to the correspondence between matroids and scalar linear network code solution of a network, discrete polymatroids corresponds to both vector linear network code solution and fractional linear network code solution of a network. It has been shown that a network has an $(r,l)$ fractional linear network code solution over $\mathbb{F}_q$ if and only if it is a $(r,l)$-discrete polymatroidal network with respect to a discrete polymatroid representable over $\mathbb{F}_q$ \cite{sundar}.

\subsection{Organization of the paper}
The organization of the paper is as follows. Section~\ref{prelims} presents the formal definitions related to network coding. In Section~\ref{sec2}, we present a key result (Theorem \ref{thm1}): for any integer $m\geq 2$, there exists a network which has an $(m,mn)$ fractional linear network code solution but has no $(w,wn)$ fractional linear network code solution for $w<m$. Using this result, we present the main result of the paper (Theorem \ref{thm2}): there exists a network which has an $(mk,mn)$ fractional linear network code solution but does not have a $(wk,wn)$ fractional linear network code solution for $w<m$. The proof of Theorem~\ref{thm1} is presented in Section~\ref{sec4}. Section~\ref{conclusion} concludes the paper.

\section{Preliminaries}\label{prelims}
We represent a network by a directed acyclic graph. The sources of a network are represented by the nodes in the graph with no incoming edges; and the terminals of a network are represented by the nodes in the graph with no outgoing edges. The sources are assumed to generate an i.i.d. random process uniformly distributed over over $\mathbb{F}_q$. The random process at one source is independent of any collection of the random processes generated at all other sources. Each edge in the graph is assumed to be of unit capacity. For any node $v$, the set of edges incoming to a node $v$ and the set of edges outgoing from node $v$ are denoted by $In(v)$ and $Out(v)$ respectively. $(u,v)$ represents an edge $e$ directed from $u$ to $v$. Each terminal demands a set of source processes. Let the set of sources and the set of terminals be denoted by $S$ and $T$ respectively.

An $(r,l)$ fractional linear network code is defined as follows. In such a code, $r$ symbols are considered at every source. Let the source process generated by the source $s_i\in S$ be denoted by $X_i\in \mathbb{F}_q^r$. Each edge is used for $l$ units of time. Let the message symbol carried by an edge $e$ be denoted by $Y_e$ where $Y_e \in \mathbb{F}_q^l$. For an edge $e$, if $tail(e)=s_i$, then $Y_e = A_{\{i,e\}}X_i$ where $A_{\{i,e\}}\in \mathbb{F}_q^{l\times r}$. Else, if $tail(e)$ is an intermediate node, then for $Y_e = \sum_{e^\prime \in In(tail(e))} A_{\{e^\prime,e\}}Y_e^\prime$ where $A_{\{e^\prime,e\}} \in \mathbb{F}_q^{l\times l}$. Matrices $A_{\{i,e\}}$ and $A_{\{e^\prime,e\}}$ are called the local coding matrices. The terminals compute a set of $r$ length vectors from its incoming edges. If $X_{t_i}\in\mathbb{F}_q^r$ denotes a vector that is computed by a terminal $t_i\in T$ then, $X_{t_i} =  \sum_{e\in In(t_i)} A_{\{e,t_i\}}Y_{e}$ where $A_{\{e,t_i\}}\in \mathbb{F}_q^{r\times l}$. A network is said to have an $(r,l)$ fractional linear network code solution if the sources are able to send its symbols to the respective terminals in $l$ usages of the network using an $(r,l)$ fractional linear network code. An $(r,l)$ fractional linear network code is said to be achievable if there exists an $(r,l)$ fractional linear network code solution.

A network is said to have an $(r,l)$ routing solution if the sources are able to send its symbols to the respective terminals in $l$ usages of the network only through routing.

\section{Fractional Linear Network Code Solutions} \label{sec2}
\begin{figure}[!t]
\centering
\includegraphics[width=0.48\textwidth]{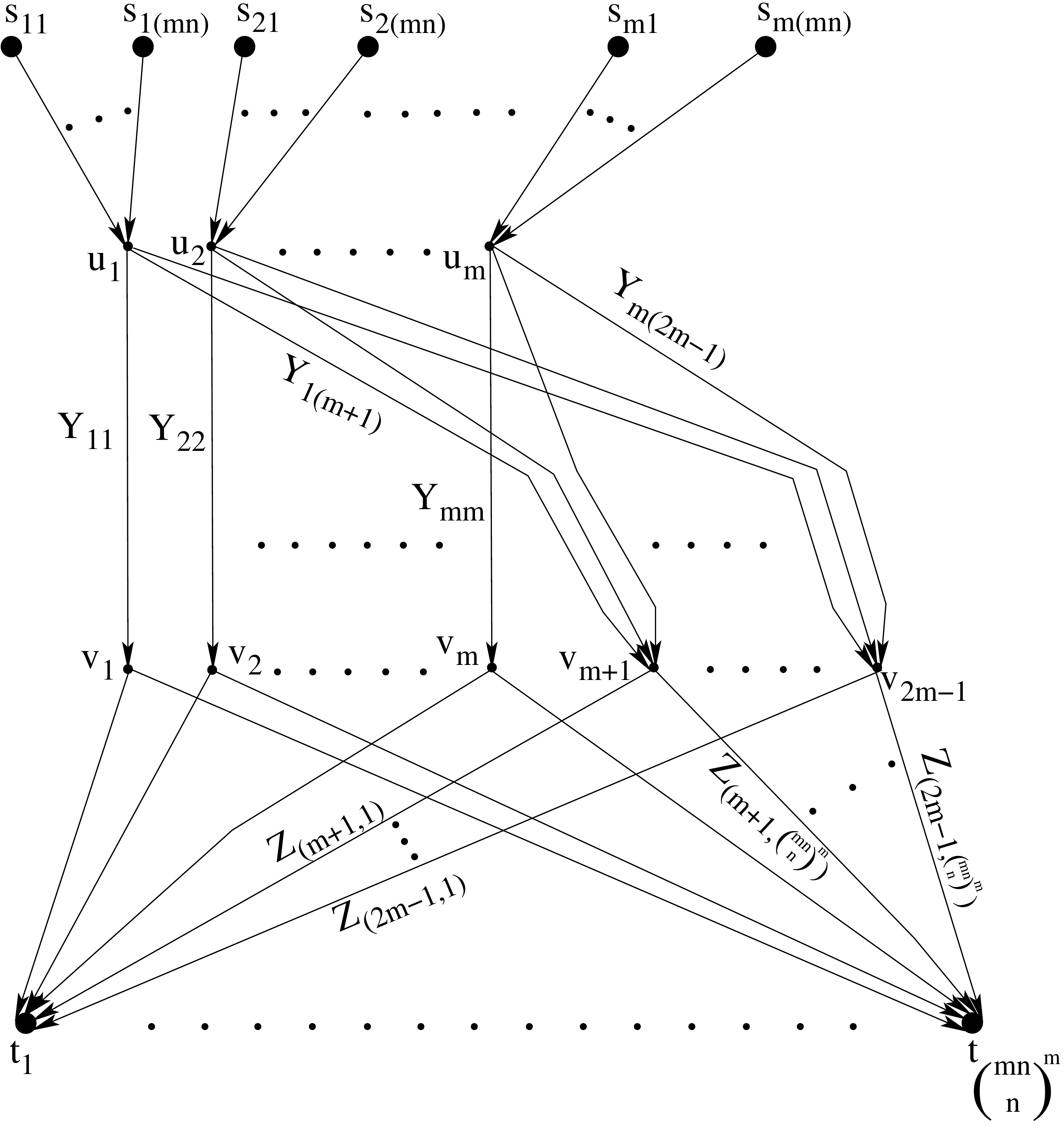}
\caption{A communication network $\mathcal{N}$. For any integer $m\geq 2$, $\mathcal{N}$ has an $(m,mn)$ fractional linear network code solution but has no $(w,wn)$ fractional linear network code solution when $w$ is less than $m$.}
\label{general}
\end{figure}
Consider the network $\mathcal{N}$ shown in Fig.~\ref{general}. We note that the network $\mathcal{N}$ is a further generalization of the ``generalized M-network'' presented in \cite{das}. $\mathcal{N}$ has $m^2n$ sources and $\binom {mn}{n}^m$ terminals. The sources are partitioned into $m$ sets $S_1,S_2,\ldots,S_m$. Each set has $mn$ many sources. The $j^{th}$ source in the set $S_i$ is denoted by $s_{ij}$. Each terminal demands information from $n$ sources of each set. If $T_{i}$ is the set of all sources demanded by $t_i$, then $T_{i}\neq T_{j}$ if $i\neq j$. Without loss of generality assume that the terminal $t_1$ demands the messages from the sources $s_{ij}$ for $1\leq i\leq m$, $1\leq j\leq n$. There are following sets of edges in the network $\mathcal{N}$: $\{(s_{ij},u_i)\} | 1\leq i\leq m, 1\leq j \leq mn$; $\{e_{ii} = (u_i,v_i)| 1\leq i\leq m\}$; $\{e_{ij} = (u_i,v_j) | 1\leq i\leq m, m+1\leq j\leq 2m-1\}$; and $\{(v_i$,$t_j) | 1\leq i\leq 2m-1, 1\leq j\leq \binom {mn}{n}^m\}$.
The message generated at the source $s_{ij}$ for $1\leq i\leq m$ and $1\leq j\leq mn$ is denoted by $X_{ij}$. All the source symbols are from the finite field $\mathbb{F}_q$. The information transmitted over the edge $e_{ii}$ for $1\leq i\leq m$ is denoted by $Y_{ii}$, the information transmitted over the edge $e_{ij}$ for $1\leq i\leq m$ and $m+1\leq j\leq 2m-1$ is denoted by $Y_{ij}$, and the information transmitted over the edge $(v_{i},t_{j})$ for $1\leq i\leq 2m-1$ and $1\leq j\leq \binom {mn}{n}^m$ is denoted by $Z_{ij}$. 

Note that for $n=1$, the network $\mathcal{N}$ reduces to the ``generalized M-network'' presented in \cite{das}.
\begin{lemma}\label{le1}
The capacity of the network $\mathcal{N}$ is upper bounded by $1/n$.
\end{lemma}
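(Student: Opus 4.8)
The plan is a cut-set argument that leverages the full collection of terminals at once. Suppose $\mathcal{N}$ has an $(r,l)$ fractional linear network code solution; since the sources are mutually independent and uniform, $H(X_{ij})=r\log q$, and every edge carries at most $l$ field symbols, so $H(Y_e)\le l\log q$. I want to conclude $nr\le l$, which is precisely the claimed bound on the capacity $\sup r/l$.

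First I would single out the edge set $E^\star=\bigcup_{i=1}^{m}Out(u_i)$. For each $i$ we have $Out(u_i)=\{e_{ii}\}\cup\{e_{ij}:m+1\le j\le 2m-1\}$, which has exactly $m$ edges, and these sets are pairwise disjoint, so $|E^\star|=m^2$. Because the unique outgoing edge of every source $s_{ij}$ enters $u_i$ and the only edges leaving $u_i$ belong to $E^\star$, the set $E^\star$ separates all sources from all terminals; hence every symbol received by any terminal is a deterministic function of the $m^2l$ field symbols carried on $E^\star$ over the $l$ network uses. Call this tuple $Y_{E^\star}$, so $H(Y_{E^\star})\le m^2 l\log q$.

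The second ingredient is that the terminals \emph{jointly} demand every source. Since $\mathcal{N}$ has $\binom{mn}{n}^{m}$ terminals with pairwise distinct demand sets and there are exactly that many ways to choose an $n$-subset of $\{1,\dots,mn\}$ in each of the $m$ groups, every such choice is the demand set of some terminal; in particular, for each group $i$ every index $j\in\{1,\dots,mn\}$ is demanded by some terminal. As each demanded source is recovered correctly and every terminal's output is a function of $Y_{E^\star}$, every source $X_{ij}$ is a function of $Y_{E^\star}$. Then, by independence of the $m^2 n$ sources,
\[
m^2 n\, r\log q \;=\; H\big(\{X_{ij}\}\big)\;\le\; H(Y_{E^\star})\;\le\; m^2 l\log q ,
\]
giving $nr\le l$ and thus capacity at most $1/n$.

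The step I expect to matter most is the realization that one must use all terminals simultaneously rather than a single terminal's min-cut: the min-cut between the sources demanded by one terminal $t_k$ and $t_k$ is only $2m-1$ (take $\{e_{ii}:1\le i\le m\}$ together with the $m-1$ edges $(v_j,t_k)$, $m+1\le j\le 2m-1$), which alone yields merely $r/l\le(2m-1)/(mn)$. It is the combination of the cut $E^\star$ with the fact that the union of all terminals' demands is the \emph{entire} source set that produces the factor-$m$ improvement to $1/n$. The remaining points — checking that $E^\star$ is truly a source--terminal cut and that the $\binom{mn}{n}^m$ terminals realize all demand combinations — are routine.
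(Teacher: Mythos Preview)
Your argument is correct and is essentially the same cut-set bound the paper uses: both exploit that the union of all terminals' demands covers every source, so the edges leaving the $u_i$'s must carry full source entropy. The only difference is scope---the paper applies the bound to a single branch, using just the $m$ edges $\{Y_{11},Y_{1(m+1)},\ldots,Y_{1(2m-1)}\}$ out of $u_1$ against the $mn$ sources in $S_1$ to get $mnr\le ml$ directly, whereas you aggregate all $m$ branches into $E^\star$; this yields the same inequality after cancelling $m^2$ instead of $m$, so your detour through the full cut and your discussion of the single-terminal min-cut $(2m-1)$ are unnecessary but not wrong.
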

\textit{Proof.} Consider an $(r,l)$ fractional network code solution for $\mathcal{N}$. Since there is no source which is not demanded by any terminal, all the source messages in $S_1$ must get computed from the set $\{Y_{11},Y_{1(m+1)},Y_{1(m+2)},\ldots,Y_{1(2m-1)} \}$. Hence,
\begin{IEEEeqnarray*}{l}
H(Y_{11},Y_{1(m+1)},Y_{1(m+2)},\ldots ,Y_{1(2m-1)})\\
\qquad\qquad\qquad\qquad\qquad\qquad\geq H(X_{11},X_{12},\ldots ,X_{1(mn)})\\
\text{Or, } H(Y_{11}) + H(Y_{1(m+1)}) + \cdots + H(Y_{1(2m-1)})\\
\qquad\qquad\qquad\geq H(X_{11}) + H(X_{12}) + \cdots  + H(X_{1(mn)})\\
\text{Or, } ml\,\text{log}_2 q \geq r(mn)\text{log}_2 q\\
\text{Therefore, } \frac{r}{l} \leq \frac{1}{n} \qquad\qquad\qquad\qquad\qquad\qquad\qquad\qquad\qed
\end{IEEEeqnarray*}
 
Let us assume that the network $\mathcal{N}$ has a $(d,dn)$ fractional linear network code solution. The following theorem puts a constraint on $d$.

\begin{theorem}\label{thm1}
For an arbitrary finite field $\mathbb{F}_q$, there exists a network which has a $(d,dn)$ fractional linear network code solution over $\mathbb{F}_q$ if and only if $d$ is an integer multiple of $m$. 
\end{theorem}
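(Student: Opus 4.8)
The plan is to handle the two directions separately: sufficiency ($m\mid d$) by an explicit routing code, and necessity by reducing decodability to a dimension inequality about the encodings at the nodes $u_i$ and then proving a sharp combinatorial lemma. For sufficiency it is enough to construct an $(m,mn)$ solution of $\mathcal N$, since running $k$ independent copies of it yields an $(mk,mkn)$ solution and every positive multiple of $m$ equals some $mk$. I would identify the $mn$ sources of each $S_i$ with $[m]\times[n]$, write source $(a,b)$'s message as $(X^{(1)}_{i,(a,b)},\dots,X^{(m)}_{i,(a,b)})$, and for $l=0,1,\dots,m-1$ let $u_i$ transmit on its $l$-th outgoing edge (this is $e_{ii}$ for $l=0$ and $e_{i(m+l)}$ otherwise) the $mn$ symbols $\{X^{(a+l)}_{i,(a,b)}:a\in[m],\,b\in[n]\}$, with the superscript read modulo $m$ into $[m]$. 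Then the $m$ edges out of $u_i$ carry each symbol of each source of $S_i$ exactly once. Let $v_i$ forward $Y_{ii}$ to every terminal for $1\le i\le m$, and let $v_{m+l}$ forward to the terminal with demand $(D_1,\dots,D_m)$ precisely the $mn$ symbols $\{X^{(a+l)}_{i,(a,b)}:1\le i\le m,\ (a,b)\in D_i\}$, which exactly fills the edge. That terminal then receives, for each source $(a,b)$ it demands from $S_i$, the symbol $X^{(a)}_{i,(a,b)}$ from $v_i$ and the symbols $X^{(a+l)}_{i,(a,b)}$, $l=1,\dots,m-1$, from $v_{m+1},\dots,v_{2m-1}$ — i.e. all $m$ of its symbols — so it decodes; this code is routing, hence valid over any $\mathbb F_q$.

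For necessity, suppose $\mathcal N$ has a $(d,dn)$ linear solution. Since the cut-set bound of Lemma~\ref{le1} is met with equality, for each $i$ the linear map $X^{(i)}:=(X_{i1},\dots,X_{i(mn)})\mapsto(Y_{ii},Y_{i(m+1)},\dots,Y_{i(2m-1)})$ is a bijection of $\mathbb F_q^{dmn}$; writing $Y_{ii}=B_{i0}X^{(i)}$ and $Y_{i(m+l)}=B_{il}X^{(i)}$, the subspaces $F_{i,l}:=\bigcap_{l'\ne l}\ker B_{il'}$ ($l=0,\dots,m-1$), each of dimension $dn$, form a direct sum decomposition of the message space $V_i=\bigoplus_{j=1}^{mn}W_{ij}$ of $S_i$, where $W_{ij}$ ($\dim d$) is the message space of $s_{ij}$. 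One may assume every $v_i$ ($i\le m$) forwards $Y_{ii}$ unchanged, as this can only help the terminals. A terminal $t$ with demand $(D_1,\dots,D_m)$ decodes iff $\ker(X\mapsto Z_t)\subseteq\bigoplus_i\bigoplus_{j\notin D_i}W_{ij}$ for its received vector $Z_t$; because $X^{(i)}\mapsto Y_{i(m+l)}$ factors through the $F_{i,l}$-component of $X^{(i)}$, this kernel splits as a direct sum over $l\in\{1,\dots,m-1\}$ of subspaces of $\bigoplus_i F_{i,l}$, and since $v_{m+l}$ transmits $\sum_i M_{i,l}Y_{i(m+l)}$ for freely chosen $dn\times dn$ matrices $M_{i,l}$, the $l$-th summand may be made any subspace of codimension $\le dn$ in $\bigoplus_i F_{i,l}$. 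Hence $t$ decodes \emph{if and only if} $\sum_{i=1}^m\dim\!\big(F_{i,l}\cap\bigoplus_{j\notin D_i}W_{ij}\big)\ge(m-1)dn$ for each $l$, equivalently $\sum_{i=1}^m\dim\mathrm{proj}_{D_i}(F_{i,l})\le dn$, where $\mathrm{proj}_D$ is the projection of $V_i$ onto $\bigoplus_{j\in D}W_{ij}$.

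The heart of the matter is the following lemma, which I would prove by a pivot count: if $F\subseteq\bigoplus_{j=1}^{mn}W_{ij}$ has dimension $dn$, then some $n$-element set $D$ satisfies $\dim\mathrm{proj}_D(F)\ge n\lceil d/m\rceil$. Indeed, ordering coordinates so that each block's coordinates are consecutive and putting $F$ in reduced row-echelon form, let $\pi_j$ be the number of pivots in block $j$; then $\pi_j\ge0$, $\sum_j\pi_j=dn$, and the echelon structure gives $\dim\mathrm{proj}_D(F)\ge\sum_{j\in D}\pi_j$ for every $D$. If the $n$ largest $\pi_j$ summed to at most $n\lceil d/m\rceil-1$, every other $\pi_j$ would be $\le\lceil d/m\rceil-1$, so $dn=\sum_j\pi_j\le(n\lceil d/m\rceil-1)+(mn-n)(\lceil d/m\rceil-1)=n(m\lceil d/m\rceil-m+1)-1\le nd-1$ (using $m\lceil d/m\rceil\le d+m-1$), a contradiction. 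Since every tuple $(D_1,\dots,D_m)$ of $n$-subsets occurs as a terminal's demand, I would pick each $D_i$ realizing the lemma for $F=F_{i,1}$; the decodability inequality for $l=1$ then gives $mn\lceil d/m\rceil\le\sum_i\dim\mathrm{proj}_{D_i}(F_{i,1})\le dn$, i.e. $m\lceil d/m\rceil\le d$, which forces $m\mid d$. The main obstacle is precisely obtaining this sharp lemma: a crude averaging bound yields only $\dim\mathrm{proj}_D(F)\ge\lceil dn/m\rceil$ for the best $D$, which is too weak exactly when $m\nmid d$ but $m\mid dn$; the reduced-echelon pivot count is what pins down the correct threshold $n\lceil d/m\rceil$.
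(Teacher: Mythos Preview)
Your argument is correct and takes a genuinely different route from the paper's. The paper proves necessity via the discrete-polymatroid framework of \cite{sundar}: working with the rank function $\mathtt g=\rho\circ f$, it establishes through a chain of submodularity manipulations (Claims~1--4) that $\mathtt g(Y_{ii},X_{ij})=\frac{(nm+m-1)d}{m}$, and since $\gcd(nm+m-1,m)=1$ and rank values must be integers, $m\mid d$. Your approach bypasses polymatroids entirely and works directly with the linear-algebraic structure: from cut-set tightness you extract the coordinate decomposition $V_i=\bigoplus_l F_{i,l}$, reduce decodability of the terminal with demand $(D_1,\dots,D_m)$ to $\sum_i\dim\mathrm{proj}_{D_i}(F_{i,l})\le dn$, and then prove the sharp combinatorial lemma (via echelon pivot counts) that every $dn$-dimensional $F$ admits an $n$-subset $D$ with $\dim\mathrm{proj}_D(F)\ge n\lceil d/m\rceil$; summing over $i$ and using that every demand tuple is realised by some terminal gives $mn\lceil d/m\rceil\le dn$, i.e.\ $m\mid d$.

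What each approach buys: the paper's rank-function argument is agnostic to linearity (it would apply verbatim to any polymatroidal code) and pins down an explicit numerical invariant $\frac{(nm+m-1)d}{m}$; your argument is more elementary, explicitly identifies which terminal fails when $m\nmid d$, and the pivot-counting lemma is a clean standalone statement. Two small remarks: your sufficiency construction is correct but more elaborate than necessary---the paper simply routes the $p$-th coordinate of every $X_{ij}$ along $e_{ii}$ (for $p=1$) or $e_{i(m-1+p)}$ (for $p\ge2$); and your reduction ``one may assume $v_i$ forwards $Y_{ii}$ unchanged'' deserves one sentence of justification (any function of $Y_{ii}$ the terminal could have received is computable from $Y_{ii}$ itself, so this assumption is without loss of generality for the \emph{nonexistence} direction).
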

We prove the theorem by using the network $\mathcal{N}$. We defer the proof of this theorem to Section~\ref{sec4}.
\begin{corollary}
The network $\mathcal{N}$ has an $(m,mn)$ fraction linear network code solution but has no $(w,wn)$ fractional linear network code solution for ${w<m}$.
\end{corollary}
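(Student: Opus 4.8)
\textit{Proof proposal.} This corollary is an immediate specialization of Theorem~\ref{thm1}, so the plan is simply to instantiate that theorem at the relevant values of $d$. The network witnessing Theorem~\ref{thm1} is $\mathcal{N}$ itself (as announced right after the theorem statement), so no new construction is needed here.

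First I would establish the positive part. Take $d = m$. Since $m = 1\cdot m$ is an integer multiple of $m$, the ``if'' direction of Theorem~\ref{thm1} applies to $\mathcal{N}$ over any $\mathbb{F}_q$, yielding an $(m,mn)$ fractional linear network code solution. Next I would establish the negative part by contraposition. Fix a positive integer $w$ with $w < m$, and suppose toward a contradiction that $\mathcal{N}$ had a $(w,wn)$ fractional linear network code solution over some $\mathbb{F}_q$. Then the ``only if'' direction of Theorem~\ref{thm1} forces $w$ to be an integer multiple of $m$. But the smallest positive integer multiple of $m$ is $m$ itself, contradicting $0 < w < m$. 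Hence no such solution exists.

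I do not anticipate a genuine obstacle in this argument: all the substantive work — both exhibiting the $(mk',mk'n)$-type solutions and ruling out the non-multiple dimensions via an entropy/linear-algebra argument on the cut structure of $\mathcal{N}$ — is carried out in the proof of Theorem~\ref{thm1} in Section~\ref{sec4}. The only point requiring a moment of care is the trivial number-theoretic observation that no integer strictly between $0$ and $m$ is divisible by $m$, which closes the negative part. One could also remark that this recovers, at $n=1$, the vector-linear statement of \cite{das} about the generalized M-network.
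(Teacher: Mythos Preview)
Your proposal is correct and matches the paper's treatment: the corollary is stated immediately after Theorem~\ref{thm1} with no separate proof, precisely because it follows by instantiating $d=m$ for the positive part and noting that no positive $w<m$ can be a multiple of $m$ for the negative part. The closing remark about $n=1$ actually belongs to the \emph{next} corollary in the paper, not this one, so you may want to drop it here.
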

For a special case of $n=1$, Theorem~\ref{thm1} specializes to the following Corollary:
\begin{corollary}
For an arbitrary finite field $\mathbb{F}_q$, there exists a network which has a $(d,d)$ fractional linear network code solution over $\mathbb{F}_q$ if and only if $d$ is an integer multiple of $m$. 
\end{corollary}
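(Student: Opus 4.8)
The final corollary is the $n=1$ instance of Theorem~\ref{thm1} (a $(d,d)$ solution is a $(d,dn)$ solution when $n=1$), so it suffices to prove Theorem~\ref{thm1}; the witnessing network is $\mathcal{N}$ of Fig.~\ref{general}.

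\emph{Sufficiency.} Suppose $d=km$. Split each source $s_{i\ell}$ into $m$ sub-blocks of $k$ symbols, $s_{i\ell}=(s_{i\ell}^{(1)},\dots,s_{i\ell}^{(m)})$, and load edge $e_{ii}$ with the first sub-block of all $mn$ sources of $S_i$ and edge $e_{i,m+p}$ ($1\le p\le m-1$) with the $(p{+}1)$-st sub-block of all $mn$ sources of $S_i$; each of these vectors has length $k\cdot mn=dn$, so $u_i$ forwards all of $S_i$ at exact edge capacity. Let each $v_i$ with $i\le m$ forward $Y_{ii}$ unchanged, and let the shared node $v_{m+p}$ send to a terminal demanding the $n$-subsets $B_1,\dots,B_m$ exactly the $(p{+}1)$-st sub-blocks of the sources indexed by $B_1,\dots,B_m$ --- a length-$k\cdot mn=dn$ vector, hence within capacity, and a function of its inputs $Y_{1,m+p},\dots,Y_{m,m+p}$. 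Then every terminal receives, for each demanded source, its first sub-block (from the appropriate $v_i$, $i\le m$) together with its remaining $m-1$ sub-blocks (one from each shared node), and thus recovers that source. This is a $(d,dn)$ linear solution over every $\mathbb{F}_q$.

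\emph{Necessity.} Let a $(d,dn)$ solution be given. By Lemma~\ref{le1} it meets the capacity $1/n$ with equality, so every inequality in the proof of Lemma~\ref{le1} is an equality for every $S_i$: the map $X_{S_i}\mapsto(Y_{ii},Y_{i,m+1},\dots,Y_{i,2m-1})$ is a linear bijection with the $Y$-blocks independent, i.e.\ the row spaces $R_{ii},R_{i,m+1},\dots,R_{i,2m-1}$ of the relevant local matrices give a direct-sum decomposition of the $mnd$-dimensional message space $U_i$ of $S_i$ into $m$ subspaces of dimension $dn$. Since $v_i$ can forward all of $Y_{ii}$, the decisive bottleneck is at the shared nodes: the length-$dn$ vector that $v_j$ ($j\ge m+1$) sends to a terminal must let that terminal recover, for every $i$, the $R_{ij}$-component of the $nd$-dimensional subspace $W_{B_i}\subseteq U_i$ spanned by the sources demanded from $S_i$; and, as $Y_{1j},\dots,Y_{mj}$ are independent, these requirements are simultaneous within a $dn$-dimensional budget:
\[
\sum_{i=1}^{m}\Big(nd-\dim\big(W_{B_i}\cap{\textstyle\sum_{j'\neq j}}R_{ij'}\big)\Big)\le dn
\]
for every shared $j$ and all choices of $n$-subsets $B_1,\dots,B_m$; conversely, decompositions obeying these inequalities are easily turned into a solution. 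So Theorem~\ref{thm1} reduces to the claim that $\mathbb{F}_q^{mnd}$, viewed with its $mn$ coordinate blocks of size $d$ (one block per source), admits such $m$-fold direct-sum decompositions if and only if $m\mid d$. For the nontrivial ``only if'' I would choose $B_1,\dots,B_m$ to minimize the intersections $W_{B_i}\cap\sum_{j'\neq j}R_{ij'}$; combining the resulting lower bounds with the fact that these are fixed $(m-1)dn$-dimensional subspaces and that the $W_{B_i}$ range over all $\binom{mn}{n}$ coordinate $nd$-subspaces, a dimension count forces each size-$d$ block to be partitioned equally among the $m$ pieces $R_{ii},R_{i,m+1},\dots,R_{i,2m-1}$, which requires $m\mid d$.

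\emph{Main obstacle.} Everything except this last linear-algebraic lemma is routine; the crux is showing that when $m\nmid d$ \emph{no} decomposition --- coordinate-aligned or not --- can satisfy the shared-node inequality for all terminals. One must identify the worst-case demanded subsets and extract from the ensuing dimension count a forced non-integrality (already for the M-network this appears as ``$\kappa=d/2$'' coming from opposing capacity bounds at $u_1$ and at the middle node), and make the argument go through uniformly in $m$ and $n$ rather than case by case.
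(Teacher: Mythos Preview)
Your reduction of the corollary to the $n=1$ case of Theorem~\ref{thm1} is correct, and your sufficiency construction is essentially the paper's: both are routing solutions that split each source into $m$ equal blocks and send the $p$-th block over the $p$-th outgoing edge of $u_i$.

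The necessity direction, however, is not proved. You arrive at a subspace-decomposition reformulation and then explicitly flag the ``Main obstacle'': showing that when $m\nmid d$ \emph{no} decomposition can satisfy your shared-node inequality for all terminals. Your sketch (``a dimension count forces each size-$d$ block to be partitioned equally among the $m$ pieces'') names the desired conclusion but gives no mechanism to reach it; the decompositions need not be coordinate-aligned, and it is not apparent which choices of $B_1,\dots,B_m$ make the inequality bite, nor how to combine them. As written, this is a proof outline with the central lemma missing.

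The paper takes a different and self-contained route that avoids this obstacle. Working with the rank function $\mathtt g=\rho\circ f$ of the associated representable discrete polymatroid, it establishes through a chain of submodularity arguments (Claims~\ref{clai}--\ref{cla4}) two facts: (i) for every $n$-subset $X_R$ of a block $S_i$, $\mathtt g(Y_{ii},X_R)\le\frac{(2m-1)nd}{m}$; and (ii) by a downward induction on the subset size, $\mathtt g(Y_{ii},X_{ij})\le\frac{(nm+m-1)d}{m}$ for every single source $X_{ij}$. A matching lower bound from $n$-way submodularity then forces \emph{equality}, $\mathtt g(Y_{ii},X_{ij})=\frac{(nm+m-1)d}{m}$. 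Since $\gcd(nm+m-1,m)=1$ and ranks are integers, $m\mid d$. Translated into your subspace language, this equality says $\dim\bigl(R_{ii}\cap\mathrm{span}(X_{ij})\bigr)=d/m$ for every source --- precisely the ``equal partition'' you wanted --- but the point is that the paper \emph{derives} it, via the rank calculus, rather than asserting it as the endpoint of an unspecified dimension count.
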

\noindent We note that this corollary is the main result of the paper \cite{das}.

Consider $k$ copies of $\mathcal{N}$, named as $\mathcal{N}_1,\mathcal{N}_2,\ldots,\mathcal{N}_k$. Denote the source $s_{ij}$ in $\mathcal{N}$ as the source $s_{ijp}$ in $\mathcal{N}_p$ for $1 \leq p \leq k$. Denote all the edges and terminals in $\mathcal{N}_p$ in a similar manner. We now construct a network $\mathcal{N}_{||}$ by connecting the $k$ copies of $\mathcal{N}$ in parallel such that for every $i$ and $j$ where $1 \leq i \leq m, 1 \leq j \leq mn$, $k$ sources $s_{ij1}$, $s_{ij2}$, $\ldots, s_{ijk}$ are combined and the combined source is denoted as $s_{ij}$. Similar combining is done with the terminals as well. Apart from sources and terminals, the rest of the nodes and edges of the $k$ copies of $\mathcal{N}$ remain disjoint in $\mathcal{N}_{||}$. 
\begin{theorem}
The capacity of $\mathcal{N}_{||}$ is upper bounded by $\frac{k}{n}$.
\end{theorem}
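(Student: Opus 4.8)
The plan is to repeat, with minor bookkeeping, the cut argument used in the proof of Lemma~\ref{le1}, now applied to all $k$ parallel copies at once. Fix an arbitrary $(r,l)$ fractional network code solution for $\mathcal{N}_{||}$. First I would choose the cut: put on the source side the $mn$ combined sources $s_{11},s_{12},\ldots,s_{1(mn)}$ of the first group together with the $k$ nodes $u_{11},u_{12},\ldots,u_{1k}$, one from each copy. Since the $k$ copies share only sources and terminals and are otherwise internally disjoint, the node $u_{1p}$ receives only from $s_{11},\ldots,s_{1(mn)}$ and nothing inside copy $p$ feeds back to those sources; hence this vertex set has no incoming edges, and its outgoing edges are precisely, for each $p\in\{1,\ldots,k\}$, the $m$ edges $e_{11p},e_{1(m+1)p},\ldots,e_{1(2m-1)p}$ leaving $u_{1p}$, for a total of $mk$ edges, which therefore form a cut separating the first-group sources from every terminal.

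The second step converts this into a rate bound exactly as in Lemma~\ref{le1}. Every source of the first group is demanded by some terminal (the terminal set of $\mathcal{N}_{||}$ inherits this property from $\mathcal{N}$), so all $mn$ first-group messages must be recoverable from the symbols carried on these $mk$ cut edges; consequently $H(X_{11},\ldots,X_{1(mn)})$ is at most the sum of the entropies of the symbols on the $mk$ cut edges, which is at most $mk\,l\log_2 q$ because each edge carries an element of $\mathbb{F}_q^l$. On the other hand, the $mn$ combined sources of the first group generate mutually independent messages, each uniform on $\mathbb{F}_q^r$, so $H(X_{11},\ldots,X_{1(mn)})=mn\,r\log_2 q$. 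Putting the two together gives $mn\,r\le mk\,l$, that is $r/l\le k/n$; since this holds for every fractional solution, the capacity of $\mathcal{N}_{||}$ is at most $k/n$.

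The only delicate point — the one I would flag as the main obstacle — is the topological claim that the $mk$ listed edges really do form a cut, i.e.\ that the merging of sources and terminals in the construction of $\mathcal{N}_{||}$ does not create any new route by which information about the first-group messages can reach a terminal while bypassing these edges. Once that is verified directly from the description of $\mathcal{N}_{||}$ (sources and terminals identified across copies, all internal nodes $u_{ip},v_{ip}$ and the edges among them kept disjoint), the remaining entropy computation is word-for-word the one in the proof of Lemma~\ref{le1} with the edge count $m$ replaced by $mk$.
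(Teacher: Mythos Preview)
Your argument is correct and is essentially identical to the paper's own proof: both consider an arbitrary $(r,l)$ solution, isolate the same $mk$ edges $\{e_{11p},e_{1(m+1)p},\ldots,e_{1(2m-1)p}: 1\le p\le k\}$ as the cut separating $S_1$ from the terminals, and then repeat the entropy count of Lemma~\ref{le1} with $m$ replaced by $mk$ to obtain $r/l\le k/n$. The paper's version is terser (it simply cites Lemma~\ref{le1} after naming the edge set), while you spell out the cut verification and the entropy bookkeeping, but there is no substantive difference.
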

\begin{proof}
Consider an $(r,l)$ fractional linear network code solution for $\mathcal{N}_{||}$. Since from information carried by the edges in the set $\{\{e_{11p},e_{1(m+1)p},e_{1(m+2)p},\ldots,e_{1(2m-1)p}\}| 1\leq p \leq k\}$, $t_1$ must be able to compute all the messages from the sources in $S_1$, proceeding similar to Lemma~\ref{le1} we get $\frac{r}{l}\leq\frac{k}{n}$.
\end{proof}
Let $y$ be a positive rational number such that $yk$ and $yn$ are integers. We have the following lemma:
\begin{lemma}\label{l1}
If $\mathcal{N}_{||}$ has a $(yk,yn)$ fractional linear network code solution, then $\mathcal{N}$ has a $(yk,ykn)$ fractional linear network code solution. 
\end{lemma}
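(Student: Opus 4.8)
The plan is to prove Lemma~\ref{l1} by a temporal ``time-sharing'' argument: a single copy of $\mathcal{N}$, used for $ykn$ time slots, can simulate block by block the $k$ internally disjoint copies $\mathcal{N}_1,\dots,\mathcal{N}_k$ that make up $\mathcal{N}_{||}$. First I would fix a $(yk,yn)$ fractional linear network code solution for $\mathcal{N}_{||}$, together with its local coding matrices and its terminal decoding matrices. Recall that $\mathcal{N}_{||}$ is obtained from $\mathcal{N}_1,\dots,\mathcal{N}_k$ by merging, for each pair $(i,j)$, the $k$ copies $s_{ij1},\dots,s_{ijk}$ into a single source $s_{ij}$ (and merging terminals likewise), while keeping the interior of each $\mathcal{N}_p$ disjoint. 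In particular, in $\mathcal{N}_{||}$ the source $s_{ij}$ has the $k$ outgoing edges $(s_{ij},u_{ip})$, $1\le p\le k$, each carrying a vector in $\mathbb{F}_q^{yn}$ that is a linear image of $X_{ij}\in\mathbb{F}_q^{yk}$, and each terminal $t_j$ has the $(2m-1)k$ incoming edges $(v_{ip},t_j)$, $1\le i\le 2m-1$, $1\le p\le k$.

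Next I would construct a $(yk,ykn)$ code for $\mathcal{N}$ by partitioning its $ykn$ uses into $k$ consecutive blocks of $yn$ uses, where block $p$ carries out the computation of $\mathcal{N}_p$. For the source $s_{ij}$, whose single outgoing edge $(s_{ij},u_i)$ in $\mathcal{N}$ carries a vector in $\mathbb{F}_q^{ykn}$, I would take as its local coding matrix the vertical stacking of the $k$ matrices that $s_{ij}$ uses in $\mathcal{N}_{||}$ on the edges $(s_{ij},u_{ip})$; then in block $p$ the edge $(s_{ij},u_i)$ carries exactly the vector that $(s_{ij},u_{ip})$ carries in $\mathcal{N}_{||}$. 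For an interior node ($u_i$ or $v_i$), I would make all its coding matrices block-diagonal, so that within block $p$ it applies to its block-$p$ inputs precisely the local coding matrices of the corresponding node of $\mathcal{N}_p$. A straightforward induction along the (four-layer) topological order of $\mathcal{N}$ then shows that for every edge $(v_i,t_j)$ and every block $p$, the block-$p$ segment of the vector carried by $(v_i,t_j)$ in $\mathcal{N}$ equals the vector carried by $(v_{ip},t_j)$ in $\mathcal{N}_{||}$.

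Finally I would observe that after the $ykn$ uses each terminal $t_j$ of $\mathcal{N}$ has seen the concatenation, over all $i$ and $p$, of exactly the $(2m-1)k$ edge vectors it sees in $\mathcal{N}_{||}$; since the $\mathcal{N}_{||}$ decoder recovers the demanded messages $X_{i'j'}\in\mathbb{F}_q^{yk}$ (which are literally the same source messages in $\mathcal{N}$) from those vectors by a linear map, $t_j$ applies the same map with coordinates rearranged into the blocked order and recovers its demands. This yields a $(yk,ykn)$ fractional linear network code solution for $\mathcal{N}$ over $\mathbb{F}_q$.

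I do not expect a serious obstacle, since the construction is a lossless temporal simulation; the only care required is bookkeeping, namely tracking which block of coordinates of each edge vector of $\mathcal{N}$ corresponds to which copy $\mathcal{N}_p$, and verifying that the block-diagonal interior matrices faithfully reproduce, block by block, the linear recombinations performed in the corresponding copy. The merging of sources and terminals in $\mathcal{N}_{||}$ is precisely what makes the single-source, single-terminal topology of $\mathcal{N}$ suffice, so no extra nodes or edges are introduced.
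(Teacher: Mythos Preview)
Your proposal is correct and follows the same idea as the paper's proof: the paper simply notes that whatever $\mathcal{N}_{||}$ transmits in one time unit using its $k$ disjoint internal copies, a single copy $\mathcal{N}$ can transmit in $k$ time units, which is exactly your block-by-block temporal simulation spelled out in detail.
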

\begin{proof}
This is because, what can be sent from sources to the terminals in one unit of time in $\mathcal{N}_{||}$, the same can be sent in $k$ units of time in $\mathcal{N}$. 
\end{proof}
We now present the main result of this paper.
\begin{theorem}\label{thm2}
For any positive integers $k$, $n$, $y\geq 2$, and $w$, there exists a network which has a $(yk,yn)$ fractional linear network code solution, but has no $(wk,wn)$ fractional linear network code solution for all $w<y$. 
\end{theorem}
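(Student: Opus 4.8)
The plan is to take $\mathcal{N}_{||}$ (the parallel composition built above) with the underlying $\mathcal{N}$ instantiated so that its parameter $m$ equals $y$ and with exactly $k$ copies. For the achievability half I would apply Theorem~\ref{thm1} with $d=y$: since $m=y$ divides $y$, $\mathcal{N}$ has a $(y,yn)$ fractional linear solution over $\mathbb{F}_q$. I would then split each combined source's $yk$-symbol message into $k$ blocks of $y$ symbols, route the $p$-th block to the $p$-th copy of $\mathcal{N}$, run that $(y,yn)$ solution independently inside every copy (all copies consume the same $yn$ edge-slots), and let each combined terminal concatenate the $k$ blocks it recovers, one per copy. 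This gives a $(yk,yn)$ fractional linear solution of $\mathcal{N}_{||}$ and is essentially bookkeeping on top of Theorem~\ref{thm1}.

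The real content is the converse: for a positive integer $w<y$, $\mathcal{N}_{||}$ has no $(wk,wn)$ solution. The quick attempt is to assume such a solution exists, invoke Lemma~\ref{l1} with its rational parameter set to $w$ to obtain a $(wk,wkn)$ solution of $\mathcal{N}$, and apply Theorem~\ref{thm1} (with $d=wk$) to conclude $y=m\mid wk$. This already finishes the case $\gcd(y,k)=1$, where $y\mid wk$ forces $y\mid w$, impossible for $0<w<y$; but when $\gcd(y,k)>1$ the divisibility $y\mid wk$ is not yet a contradiction, so a sharper argument is needed here.

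To obtain the full statement I would argue directly inside $\mathcal{N}_{||}$, exploiting that a $(wk,wn)$ solution attains the capacity bound $k/n$ established for $\mathcal{N}_{||}$, so that every min-cut inequality underlying that bound — the cut separating each source set $S_i$, exactly as in the proof of Lemma~\ref{le1}, but now read copy by copy — holds with equality. Unwinding this tightness shows that the $mk$ edges leaving the nodes $\{u_{ip}\}_{p=1}^{k}$ form an invertible linear map of the $S_i$-messages, and moreover that within each copy $p$ the node $u_{ip}$ must forward, without loss, a full-rank slice of what it receives; together with the terminal demands (each terminal must still recover, from copy $p$ alone, the effective message $A^{(p)}_{ij}X_{ij}$ of every source $s_{ij}$ it demands) this forces each copy $p$ to be — after an invertible change of coordinates decomposing every source's $wk$-dimensional message space into $k$ blocks of size $w$ — a genuine $(w,wn)$ fractional linear code of $\mathcal{N}$. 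Theorem~\ref{thm1} then requires $m=y$ to divide $w$, contradicting $0<w<y$.

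The step I expect to be the main obstacle is precisely this splitting argument in the converse: showing that the freedom to code jointly across the $k$ copies at the shared sources and terminals cannot be used to circumvent the $y$-divisibility. Everything upstream — the achievability construction, the reduction through Lemma~\ref{l1}, the appeals to Theorem~\ref{thm1} — is routine; the work lies in converting the equality cases of the capacity bound into the rigid per-copy structure, which is where the particular combinatorics of $\mathcal{N}$ (the shared nodes $v_{m+1},\ldots,v_{2m-1}$ and the prescribed family of terminal demands) must enter.
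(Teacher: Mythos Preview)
Your achievability argument is fine, but the converse has a real gap, and the fix is not the splitting argument you propose---it is simply a different choice of the parameter $m$.

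You instantiate $\mathcal{N}$ with $m=y$. The paper instead takes $m=yk$. With that choice the converse becomes one line: a $(wk,wn)$ solution of $\mathcal{N}_{||}$ gives, via Lemma~\ref{l1}, a $(wk,wkn)$ solution of $\mathcal{N}$, and Theorem~\ref{thm1} then forces $m=yk$ to divide $wk$; but $0<wk<yk$, so this is impossible. No case analysis on $\gcd(y,k)$ is needed, and no structural decomposition of the code is required. Achievability is equally easy with $m=yk$: Theorem~\ref{thm1} (in fact the explicit routing solution exhibited in its proof) gives a $(yk,ykn)$ solution of $\mathcal{N}$, and spreading the $ykn$ time-slots across the $k$ copies yields a $(yk,yn)$ routing solution of $\mathcal{N}_{||}$.

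By contrast, the splitting argument you outline for $m=y$ is not established and is unlikely to go through as stated. Tightness of the cut separating $S_i$ only tells you that the \emph{total} linear map from the $S_i$-messages to the $mk$ outgoing edge-vectors is invertible; it does not force any block-diagonal structure aligned with the copies. Each $u_{ip}$ sees the full $wk$-vector of every source in $S_i$ and may output arbitrary linear combinations, and each combined terminal may decode jointly across all $k$ copies. There is no evident mechanism that pins the ``effective message'' $A^{(p)}_{ij}X_{ij}$ of copy $p$ to a $w$-dimensional slice, let alone one that is consistent across all terminals so as to constitute a bona fide $(w,wn)$ code of $\mathcal{N}$. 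The paper sidesteps all of this by absorbing $k$ into $m$.
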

\begin{proof}
Consider $m=yk$ in $\mathcal{N}$. Say $\mathcal{N}_{||}$ has a $(wk,wn)$ fractional linear network code solution where $w<y$. Then, from Lemma~\ref{l1}, $\mathcal{N}$ has a $(wk,wkn)$ fractional linear network code solution. Therefore, from Theorem~\ref{thm1}, $m$ divides $wk$. However, since $m=yk$ and $wk$ is less than $yk$, $m$ cannot divide $wk$. This is a contradiction to the assumption that $\mathcal{N}_{||}$ has a $(wk,wn)$ fractional linear network code solution for $w<y$.

We now show that $\mathcal{N}_{||}$ has a $(yk,yn)$ fractional linear solution. Note that $\mathcal{N}$ has a $(yk,ykn)$ fractional linear network code solution. Theorem~\ref{thm1} ensures the existence of such a network code solution. Moreover, there is a $(yk,ykn)$ routing solution for $\mathcal{N}$ (see the proof of Theorem \ref{thm1} in Section \ref{sec4}). Therefore, trivially, $\mathcal{N}_{||}$ has a $(yk,yn)$ routing solution. This completes the proof. 
\end{proof}

\section{Proof of Theorem~\ref{thm1}}\label{sec4}
Our proof relies on a prior result on the connection between fractional linear network code solution and representable polymatroids. For the self-containment of the paper, we first provide a brief background material. For this, the definitions of a discrete polymatroid, and a discrete polymatroidal network are given in the following. These definitions are reproduced from \cite{sundar}. We also reproduce a prior result from \cite{sundar} which will be used in the proof of Theorem \ref{thm1}. We note the definitions and the result from \cite{sundar} are specialised to the network used in this paper. 

A discrete polymatroid is a set of vectors with some specific properties. Consider a set $G = \{1,2,\ldots,g\}$. Say $\mathbb{Z}_{\geq 0}$ denote the set of all positive integers. Also let $\mathbb{Z}_{\geq 0}^g$ be the set of all $g$ length vectors whose components belong to $\mathbb{Z}_{\geq 0}$. If $x$ is a vector then let $x(i)$ denote the $i^{th}$ component of $x$. A function $\rho : 2^G \rightarrow \mathbb{Z}_{\geq 0}$ qualifies to be a rank function if the following three conditions are satisfied:\\
$(1)$ $\rho(\phi) = 0 $\\
$(2)$ if $A\subseteq B\subseteq G$, then $\rho(A)\leq \rho(B)$\\
$(3)$ $\rho(A) + \rho(B) \geq \rho(A\cup B) + \rho(A\cap B)$ for $\forall A,B \subseteq G$.\\
A discrete polymatroid $\mathbb{D}$ with rank function $\rho$ is defined as $\mathbb{D} = \{x \in \mathbb{Z}_{\geq 0}^g | \sum_{i\in A} x(i) \leq \rho(A), \forall A \subseteq G\}$.
%
$\mathbb{D}$ is said to be representable over $\mathbb{F}_q$ if there exist vector subspaces $V_1,V_2,\ldots,V_g$ of a vector space $E$ over $\mathbb{F}_q$ such that for any $A\subseteq G$, $dim(\sum_{i\in A} V_i) = \rho(A)$.\\
We now state the conditions for $\mathcal{N}$ to be an $(r,l)$ discrete polymatroidal network. Recall that $X_{ij}$ is the messages generated by the source $s_{ij}\in S$. Say $X = \{X_{ij} | 1\leq i\leq m, 1\leq j\leq mn \}$. Let $Y$ be the collection of all messages carried by all the edges of the network. For any node $v$, let $Y_{In(v)}$ and $Y_{Out(v)}$ denote the set of messages carried the edges in $In(v)$ and $Out(v)$ respectively.\\ The network $\mathcal{N}$ is an $(r_1,r_2,\ldots,r_{m^2n};l)$-discrete polymatroidal network with respect to the discrete polymatroid $\mathbb{D}$ if there exists a map $f: \{X \cup Y\} \rightarrow G$ which satisfies the following conditions:\\
$(1)$ $f$ is one-to-one on the elements of $X$\\
$(2)$ $\rho(f(X_{ij})) = r_{(i-1)mn+j}$ and $\rho(f(Y_e)) \leq l$ for any $X_{ij}\in X$, $Y_e\in Y$\\
$(3)$ $\sum r_i\epsilon_{ig} \in \mathbb{D}$ where $\epsilon_{ig}\in \mathbb{Z}_{\geq 0}^g$, $\epsilon_{ig}(i) = 1$ and $\epsilon_{ig}(j) = 0$ if $j\neq i$ for $1\leq i\leq m^2n$\\
$(4)$ For any node $v$, $\rho(f(Y_{In(v)})) = \rho(f(Y_{In(v)}\cup Y_{Out(v)}))$.\vspace*{1ex}
The network $\mathcal{N}$ is an $(r,l)$-discrete polymatroidal network if it is an $(r_1,r_2,\ldots,r_{m^2n};l)$-discrete polymatroidal network for $r_1,r_2,\ldots,r_{m^2n} = r$.\\
The following theorem is reproduced with slightly different notations from \cite{sundar}.
\begin{theorem}
A network has an $(r_1,r_2,\ldots,r_{m^2n};l)$ fractional linear network code solution over $\mathbb{F}_q$, if and only if it is an $(r_1,r_2,\ldots,r_{m^2n};l)$-discrete polymatroidal with respect to a discrete polymatroid $\mathbb{D}$ representable over $\mathbb{F}_q$.
\end{theorem}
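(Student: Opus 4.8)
The plan is to prove the two directions by building a dictionary between a fractional linear code and a representation of the rank function by subspaces, exploiting that a linear code and a tuple of subspaces of a common space $E$ carry the same data once everything is viewed as a ``coordinate restriction'' of one global object. Write $N=\sum_{i=1}^{m^{2}n}r_{i}$. For the ``only if'' direction, given an $(r_{1},\dots,r_{m^{2}n};l)$ fractional linear solution I would stack all source symbols into one global vector, so that every edge message satisfies $Y_{e}=M_{e}X$ for a fixed $M_{e}\in\mathbb{F}_{q}^{l\times N}$ and every demanded message is recovered at its terminal by a fixed matrix applied to that terminal's incoming messages. Take $E=\mathbb{F}_{q}^{N}$, let $G$ enumerate the sources followed by the edges, let $f$ be that enumeration, set $V_{f(X_{ij})}$ to the coordinate block of $E$ occupied by $X_{ij}$ and $V_{f(Y_{e})}=\operatorname{rowspace}(M_{e})$, and put $\rho(A)=\dim\bigl(\sum_{g\in A}V_{g}\bigr)$. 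Then $\rho$ is a polymatroid rank function (normalisation, monotonicity and submodularity are immediate dimension identities), the induced $\mathbb{D}$ is representable over $\mathbb{F}_{q}$ by construction, and the four conditions become mechanical: (1) since distinct sources occupy distinct blocks; (2) since $\dim V_{f(X_{ij})}=r_{(i-1)mn+j}$ and $\dim V_{f(Y_{e})}=\operatorname{rank}(M_{e})\le l$; (3) since the source blocks are disjoint, hence independent, so $\dim(\sum_{g\in A}V_{g})\ge\sum_{g\in A,\,g\le m^{2}n}r_{g}$ for every $A$; and (4) because the statement that each message leaving a node is an $\mathbb{F}_{q}$-linear combination of the messages entering it --- with a source's own message counted as its input and a terminal's demanded messages as its output --- is exactly that $\operatorname{rowspace}(M_{e})$ for $e\in Out(v)$ lies inside the sum of the row spaces over $In(v)$, i.e.\ that the two spans in~(4) coincide.

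For the ``if'' direction, given subspaces $V_{1},\dots,V_{g}$ of $E$ representing $\mathbb{D}$ and a map $f$ meeting (1)--(4), I would first apply~(4) at the source nodes and then inductively along a topological order to see that every edge subspace lies in $E':=\sum_{ij}V_{f(X_{ij})}$, so I may replace $E$ by $E'$; condition~(3) then forces this sum to be \emph{direct}, giving $\dim E'=\sum_{i}r_{i}$. The key move is to encode the global source message as a \emph{linear functional} $\omega\in(E')^{*}$, declaring $\omega|_{V_{f(X_{ij})}}$ --- read off in a fixed basis of that subspace --- to be the message $X_{ij}$; since the source subspaces are independent and span $E'$, this is a bijection between source tuples and $(E')^{*}$. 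The code then sends on edge $e$ the restriction $\omega|_{V_{f(Y_{e})}}$, written as a length-$l$ vector through a fixed basis of $V_{f(Y_{e})}$ padded with zeros (admissible since $\dim V_{f(Y_{e})}\le l$ by~(2)). This is a valid linear code: processing edges in topological order, for $e\in Out(v)$ condition~(4) gives $V_{f(Y_{e})}\subseteq\sum_{e'\in In(v)}V_{f(Y_{e'})}$, so whenever all incoming restrictions $\omega|_{V_{f(Y_{e'})}}$ vanish, $\omega$ annihilates the whole sum and hence $V_{f(Y_{e})}$ --- thus $\omega|_{V_{f(Y_{e})}}$ is a fixed $\mathbb{F}_{q}$-linear function of the incoming packets, and reading it off in the chosen bases produces the local coding matrix ($l\times l$, or $l\times r$ at a source). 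Decodability is the same argument: at a terminal $t$ demanding $X_{ij}$, condition~(4) puts $V_{f(X_{ij})}\subseteq\sum_{e\in In(t)}V_{f(Y_{e})}$, so $\omega|_{V_{f(X_{ij})}}=X_{ij}$ is recoverable from the terminal's incoming packets.

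I expect the main obstacle to be conceptual rather than computational: spotting that the right global object is a functional on $E$, so that ``one subspace is contained in a sum of others'' turns into ``one restriction is computable from the others'', and making explicit the conventions hidden in condition~(4) --- that for a source node its message plays the role of the input and for a terminal its demanded messages play the role of the output --- since on a literal reading (4) is vacuous at terminals and cannot encode decodability. Once these are settled, the reduction to $E=E'$, the direct-sum consequence of~(3), and the zero-padding bookkeeping are routine, and the two constructions above are manifestly mutually inverse at the level of matrices and row spaces.
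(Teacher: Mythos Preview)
The paper does not prove this statement at all: it explicitly says the theorem ``is reproduced with slightly different notations from \cite{sundar}'' and uses it as a black box in the proof of Theorem~\ref{thm1}. There is therefore no in-paper proof to compare your proposal against.

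That said, your argument is essentially correct and is the standard subspace/representation proof of this correspondence. Two remarks. First, your ``if'' direction via linear functionals $\omega\in(E')^{*}$ is the dual of the more common presentation, in which one fixes a basis of each $V_{f(X_{ij})}$, concatenates to a basis of $E'$, and defines $Y_{e}$ as the matrix whose rows are coordinates of a chosen basis of $V_{f(Y_{e})}$ applied to the stacked source vector; the two are of course the same once one identifies $(E')^{*}$ with $\mathbb{F}_{q}^{N}$ through that global basis. Second, you are right that condition~(4), read literally on the network $\mathcal{N}$ as stated here, is vacuous at terminals and even contradictory at sources (it would force $\rho(f(Y_{Out(s_{ij})}))=0$); the intended convention---and the one used in \cite{sundar}---is exactly the one you spell out, treating $X_{ij}$ as an incoming message at $s_{ij}$ and the demanded messages as outgoing at each terminal. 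With that reading, your verification of (1)--(4) in the forward direction and your inductive containment $V_{f(Y_{e})}\subseteq E'$, direct-sum deduction from~(3), and local-coding/decoding construction in the backward direction all go through.
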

An $(r,l)$ fractional linear network code solution is an $(r_1,r_2,\ldots,r_{m^2n};l)$ fractional linear network code solution for $r_1,r_2,\ldots,r_{m^2n} = r$.\\
\textit{Proof of Theorem~\ref{thm1}:} We now show that the network $\mathcal{N}$ has a $(d,dn)$ fractional linear network code solution if and only if $d$ is an integer multiple of $m$. We first prove the `only if' part by showing that the network $\mathcal{N}$ is a $(d;dn)$-discrete polymatroidal network only if $m$ divides $d$. We then prove the `if' part by presenting an $(m,mn)$ fractional linear network code solution.

Let $\mathtt{g} = \rho\circ f$. Also let $\mathbb{Z}_n = \{0,1,\ldots,n-1\}$. Define the set $U_i = \{a + 1 + (i-1)n\; | a\in \mathbb{Z}_n \}$ and the set $C_{ij} = \{X_{ia} | a\in U_j\}$ for $1\leq i,j\leq m$.
\begin{claim}\label{clai}
For any $X_{Q_i}\subseteq \{X_{i1},X_{i2},\ldots ,X_{i(mn)}\}$,
\begin{IEEEeqnarray*}{l}
\mathtt{g}(Y_{11},X_{Q_{1}}) + \mathtt{g}(Y_{22},X_{Q_2}) + \cdots + \mathtt{g}(Y_{mm},X_{Q_m})\\
= \mathtt{g}(Y_{11},X_{Q_1},Y_{22},X_{Q_2},\ldots ,Y_{mm},X_{Q_m}).
\end{IEEEeqnarray*}
\end{claim}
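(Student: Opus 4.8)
\emph{Proof idea.} The plan is to recognise the identity as an \emph{independence} statement and reduce it to two structural features of $\mathcal{N}$. One inequality is free: subadditivity of $\rho$ (a consequence of conditions $(2)$--$(3)$ of the rank function together with $\rho\ge 0$, pushed through $f$) already gives $\mathtt{g}(Y_{11},X_{Q_1},\dots,Y_{mm},X_{Q_m})\le\sum_{i=1}^{m}\mathtt{g}(Y_{ii},X_{Q_i})$. So the content is the reverse inequality, i.e.\ that the pairwise-disjoint message sets $\{Y_{ii}\}\cup X_{Q_i}$, $i=1,\dots,m$, are jointly independent for $\mathtt{g}$.

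First I would establish that the source blocks are jointly independent, i.e.\ $\mathtt{g}(X_{11},\dots,X_{m(mn)})=\sum_{i=1}^{m}\mathtt{g}(X_{i1},\dots,X_{i(mn)})$. Condition $(3)$ of the polymatroidal-network definition (the vector with entry $d$ at each of the $m^2n$ source coordinates lies in $\mathbb{D}$) gives, via the definition of $\mathbb{D}$ and the fact that $f$ is one-to-one on $X$, that $m^2nd\le \mathtt{g}(X_{11},\dots,X_{m(mn)})$; on the other hand subadditivity gives $\mathtt{g}(X_{11},\dots,X_{m(mn)})\le\sum_{i=1}^{m}\mathtt{g}(X_{i1},\dots,X_{i(mn)})\le\sum_{i,j}\mathtt{g}(X_{ij})=m^2nd$ (using $\mathtt{g}(X_{ij})=d$ from condition $(2)$). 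Chaining forces every inequality to be an equality.

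Next I would use condition $(4)$ at the node $u_i$, whose incoming edges carry exactly the block-$i$ messages $X_{i1},\dots,X_{i(mn)}$ and one of whose outgoing edges is $e_{ii}$ (carrying $Y_{ii}$): it yields $\mathtt{g}(X_{i1},\dots,X_{i(mn)})=\mathtt{g}(X_{i1},\dots,X_{i(mn)},Y_{ii},\dots)$, so adjoining $Y_{ii}$ to any set that already contains all of block $i$ does not change $\mathtt{g}$. Put $D_i:=\{X_{i1},\dots,X_{i(mn)},Y_{ii}\}$; since $X_{Q_i}\subseteq\{X_{i1},\dots,X_{i(mn)}\}$ we have $\{Y_{ii}\}\cup X_{Q_i}\subseteq D_i$ and $\mathtt{g}(D_i)=\mathtt{g}(X_{i1},\dots,X_{i(mn)})$. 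Adjoining $Y_{11},\dots,Y_{mm}$ one at a time to $\bigcup_i\{X_{i1},\dots,X_{i(mn)}\}$ and applying submodularity with the relevant block as base set (each increment is $0$) shows $\mathtt{g}\big(\bigcup_i D_i\big)=\mathtt{g}\big(\bigcup_i\{X_{i1},\dots,X_{i(mn)}\}\big)=\sum_i\mathtt{g}(D_i)$, i.e.\ the disjoint sets $D_1,\dots,D_m$ are jointly independent. I would finish with the elementary fact that joint independence of pairwise-disjoint sets is inherited by subsets: if disjoint $D_1,\dots,D_m$ satisfy $\mathtt{g}(\bigcup_i D_i)=\sum_i\mathtt{g}(D_i)$ and $C_i\subseteq D_i$, then $\mathtt{g}(\bigcup_i C_i)=\sum_i\mathtt{g}(C_i)$ --- which follows by induction on $m$ from a single use of submodularity (for two pieces, $\mathtt{g}(C_1\cup B)+\mathtt{g}(D_1)\ge\mathtt{g}(D_1\cup B)+\mathtt{g}(C_1)$ because $(C_1\cup B)\cap D_1=C_1$ when $B\cap D_1=\emptyset$). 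Taking $C_i=\{Y_{ii}\}\cup X_{Q_i}$ gives the Claim.

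I expect the main obstacle to be the two ``independence'' facts rather than the final assembly: closing the inequality chain in the source-independence step, and, for the $u_i$ step, correctly matching $Y_{In(u_i)}$ under the map $f$ with the message set of block $i$ (equivalently, handling the source-to-node edges $(s_{ij},u_i)$ of $\mathcal{N}$). Once those are settled, the rest is routine manipulation of the submodular inequality.
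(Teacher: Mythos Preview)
Your argument is correct, but it is a genuinely different route from the paper's. The paper argues directly on the level of random variables: since $Y_{ii}$ is a function of the block-$i$ sources $X_{i1},\ldots,X_{i(mn)}$ and the source blocks are mutually independent, the pairs $(Y_{ii},X_{Q_i})$ are mutually independent, whence $H(Y_{11},X_{Q_1})+\cdots+H(Y_{mm},X_{Q_m})=H(Y_{11},X_{Q_1},\ldots,Y_{mm},X_{Q_m})$; it then transfers this entropy identity to the polymatroid rank $\mathtt{g}$ by citing \cite{dougherty}. Your proof instead stays entirely inside the discrete-polymatroid axiomatics: you extract source independence from condition~$(3)$ and $\mathtt{g}(X_{ij})=d$, absorb $Y_{ii}$ into block $i$ via condition~$(4)$ at $u_i$, and then push independence of the full blocks $D_i$ down to the subsets $\{Y_{ii}\}\cup X_{Q_i}$ by a submodularity computation. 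What your approach buys is self-containment --- no appeal to the entropy/rank correspondence of \cite{dougherty} is needed --- at the price of a longer derivation; what the paper's approach buys is brevity and intuition, leaning on the probabilistic picture. The ``obstacle'' you flag about matching $Y_{In(u_i)}$ with the block-$i$ messages is real but routine: apply condition~$(4)$ first at each source $s_{ij}$ to see that the edge message on $(s_{ij},u_i)$ is spanned by $X_{ij}$, and then at $u_i$; two uses of submodularity then give $\mathtt{g}(X_{i1},\ldots,X_{i(mn)},Y_{ii})=\mathtt{g}(X_{i1},\ldots,X_{i(mn)})$ as you need.
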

\begin{proof}
Since none of the vectors in $X_{Q_i}$ and $Y_{ii}$ are dependent on any vector in $X_{Q_j}$ and $Y_{jj}$ for $i\neq j$, $1\leq i,j\leq m$, we have $H(Y_{ii},X_{Q_i}|Y_{jj},X_{Q_j}) = H(Y_{ii},X_{Q_i})$. Hence, $H(Y_{ii},X_{Q_i}) + H(Y_{jj},X_{Q_j}) = H(Y_{ii},X_{Q_i},Y_{jj},X_{Q_j})$. Similarly the following equation must also be true:
\begin{IEEEeqnarray*}{l}
H(Y_{11},X_{Q_{1}}) + H(Y_{22},X_{Q_2}) + \cdots + H(Y_{mm},X_{Q_m})\\
= H(Y_{11},X_{Q_1},Y_{22},X_{Q_2},\ldots ,Y_{mm},X_{Q_m})\IEEEyesnumber\label{ie1}
\end{IEEEeqnarray*}
Since $\mathtt{g}(\,)$ is a rank function of a discrete polymatroid, the equation~(\ref{ie1}) remains valid if $H(\,)$ is replaced by $\mathtt{g}(\,)$ \cite{dougherty}.
\end{proof}

\begin{claim}\label{cla1}
Let $X_{P_i}\!\subset \{X_{i1},X_{i2},\ldots ,X_{i(mn)}\}$, and $|X_{P_i}|=n$ for $1\leq i\leq m$. The following set of inequalities hold.
\begin{multline}
\mathtt{g}(Y_{11},X_{P_{1}}) + \mathtt{g}(Y_{22},X_{P_2}) + \cdots + \mathtt{g}(Y_{mm},X_{P_m})\\ \leq (2m-1)nd \label{mq1}
\end{multline}
\end{claim}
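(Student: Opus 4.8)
The plan is to collapse the left–hand sum of~(\ref{mq1}) into a single joint quantity by Claim~\ref{clai}, and then bound that quantity by tracing the messages demanded by one well–chosen terminal. Since $|X_{P_i}|=n$ for each $i$ and every terminal of $\mathcal{N}$ demands exactly $n$ sources from each $S_i$, the set $\bigcup_{i=1}^{m}X_{P_i}$ is precisely the demand set of a (unique) terminal $t^{*}$. Applying Claim~\ref{clai} with $X_{Q_i}=X_{P_i}$ gives
\begin{equation*}
\sum_{i=1}^{m}\mathtt{g}(Y_{ii},X_{P_i})=\mathtt{g}\bigl(Y_{11},X_{P_1},Y_{22},X_{P_2},\ldots,Y_{mm},X_{P_m}\bigr),
\end{equation*}
so it suffices to show the right–hand side is at most $(2m-1)nd$.

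To do this I split the joint rank (writing $\mathtt{g}(A\mid B):=\mathtt{g}(A\cup B)-\mathtt{g}(B)$, which obeys the usual submodularity and monotonicity inequalities exactly as an entropy function does — the same fact already used in the proof of Claim~\ref{clai}) as $\mathtt{g}(Y_{11},\ldots,Y_{mm})+\mathtt{g}(X_{P_1},\ldots,X_{P_m}\mid Y_{11},\ldots,Y_{mm})$. The first term is at most $\sum_{i=1}^{m}\mathtt{g}(Y_{ii})\le m\cdot dn$, since every edge message has rank at most $l=dn$. For the second term, condition~$(4)$ at node $v_i$ — whose only incoming edge is $e_{ii}$ — forces $\mathtt{g}(Y_{ii},Z_{it^{*}})=\mathtt{g}(Y_{ii})$ for $1\le i\le m$, so conditioning on $\{Y_{11},\ldots,Y_{mm}\}$ is at least as strong as conditioning on $\{Z_{1t^{*}},\ldots,Z_{mt^{*}}\}$, whence
\begin{equation*}
\mathtt{g}(X_{P_1},\ldots,X_{P_m}\mid Y_{11},\ldots,Y_{mm})\le \mathtt{g}(X_{P_1},\ldots,X_{P_m}\mid Z_{1t^{*}},\ldots,Z_{mt^{*}}).
\end{equation*}
Since $t^{*}$ reconstructs all of $X_{P_1},\ldots,X_{P_m}$ from its full incoming set $Z_{1t^{*}},\ldots,Z_{(2m-1)t^{*}}$, the right–hand side is at most $\mathtt{g}(Z_{(m+1)t^{*}},\ldots,Z_{(2m-1)t^{*}})\le (m-1)\cdot dn$; that is, whatever is not already determined by the $m$ ``direct'' edges is carried by the $m-1$ ``cross'' edges entering $t^{*}$. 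Adding the two bounds yields $(2m-1)nd$, as required.

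The conditional–rank manipulations are purely formal once one grants that $\mathtt{g}$ satisfies the Shannon–type inequalities, so the real content is in reading off from condition~$(4)$ which messages are functions of which: nodes $v_1,\ldots,v_m$ have the single in–edge $e_{ii}$ and hence $\mathtt{g}(Y_{ii},Z_{it^{*}})=\mathtt{g}(Y_{ii})$, while the remaining edges into $t^{*}$ are handled only through the capacity bound $dn$. I expect the step that needs the most care to be the replacement of the conditioning set $\{Y_{11},\ldots,Y_{mm}\}$ by $\{Z_{1t^{*}},\ldots,Z_{mt^{*}}\}$ — which must be justified via submodularity, since the $Z$'s are functions of, but not subsets of, the $Y$'s — together with the observation that $t^{*}$ decodes \emph{all} of $\bigcup_{i}X_{P_i}$ rather than a proper subset; this last point is exactly where the hypothesis $|X_{P_i}|=n$, matching the per–set demand of every terminal, enters.
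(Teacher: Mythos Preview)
Your argument is correct and follows essentially the same route as the paper: collapse the sum via Claim~\ref{clai}, then bound the joint rank by the $2m-1$ edge messages feeding the unique terminal $t^{*}$ whose demand set is $\bigcup_i X_{P_i}$. The paper reaches the same bound slightly more directly---it adjoins $Z_{(m+1)t^{*}},\ldots,Z_{(2m-1)t^{*}}$ by monotonicity and then drops the $X_{P_i}$'s (since $t^{*}$ decodes them from $\{Y_{11},\ldots,Y_{mm},Z_{(m+1)t^{*}},\ldots,Z_{(2m-1)t^{*}}\}$) rather than going through your chain-rule split and conditioning-replacement, but the content is the same.
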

\begin{proof}
We show the proof for the case when $X_{P_i} = C_{i1}$ for $1\leq i \leq m$. The rest of the cases can also be proved similarly.
\begin{IEEEeqnarray*}{l}
\mathtt{g}(Y_{11},X_{P_{1}}) + \mathtt{g}(Y_{22},X_{P_2}) + \cdots + \mathtt{g}(Y_{mm},X_{P_m})\\
= \mathtt{g}(Y_{11},X_{P_1},Y_{22},X_{P_2},\ldots ,Y_{mm},X_{P_m})\IEEEyesnumber\label{nq1}\\
\leq \mathtt{g}(Y_{11},X_{P_1},Y_{22},X_{P_2},\ldots ,Y_{mm},X_{P_m},\\
\quad Z_{(m+1,1)},\ldots ,Z_{(2m-1,1)})\\
= \mathtt{g}(Y_{11}\ldots ,Y_{mm},Z_{(m+1,1)},\ldots ,Z_{(2m-1,1)})\IEEEyesnumber\label{nq2}\\
\leq mdn + (m-1)dn\\
= (2m-1)dn
\end{IEEEeqnarray*}
Equation~(\ref{nq1}) comes from Claim~\ref{clai}. Equation~(\ref{nq2}) is true because the terminal $t_1$ computes all the source symbols in the set $\{\cup_{1\leq i\leq m} C_{i1}\}$ from the vectors in the set $\{Y_{11}\ldots ,Y_{mm},Z_{(m+1,1)},\ldots,Z_{(2m-1,1)}\}$. Note that for all configuration of $X_{P_1},X_{P_2},\ldots ,X_{P_m}$ there is a terminal which demands all the sources contained in these sets. So the rest of the inequalities can be proved analogously.
\end{proof}
\begin{claim}\label{cla2}
For $1\leq i\leq m$ the set $\{X_{i1},X_{i2},\ldots ,X_{i(mn)}\}$ is arbitrarily partitioned into $m$ mutually disjoint sets each containing $n$ elements. Denote the $j^{th}$ set as $X_{U_{ij}}$. 
%
The following inequalities hold true. 
\begin{multline}
\text{For } 1\leq i\leq m,\\
\mathtt{g}(Y_{ii},X_{U_{i1}}) + \mathtt{g}(Y_{ii},X_{U_{i2}}) + \cdots + \mathtt{g}(Y_{ii},X_{U_{im}}) \\ \geq  (2m-1)nd \label{claeq}
\end{multline}
\end{claim}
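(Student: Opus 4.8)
The plan is to obtain (\ref{claeq}) from the submodularity axiom of $\mathtt{g}$ by a telescoping argument and then to close it using two facts about $\mathcal{N}$: the mutual independence of the sources in a single set $S_i$, and the exact value $\mathtt{g}(Y_{ii}) = dn$ of the rank of the message on the edge $e_{ii}$. By the evident symmetry of $\mathcal{N}$ in the index $i$ it suffices to treat one fixed $i$. Write $B = \{Y_{ii}\}$ and $A_j = X_{U_{ij}}$, so the $A_j$ are pairwise disjoint and $A_1 \cup \cdots \cup A_m = \{X_{i1}, \ldots, X_{i(mn)}\}$, and put $P_k = B \cup A_1 \cup \cdots \cup A_k$. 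Since $A_k$ is disjoint from $A_1 \cup \cdots \cup A_{k-1}$ we have $P_{k-1} \cap (B \cup A_k) = B$ and $P_{k-1} \cup (B \cup A_k) = P_k$, so submodularity gives $\mathtt{g}(B \cup A_k) \geq \mathtt{g}(P_k) - \mathtt{g}(P_{k-1}) + \mathtt{g}(B)$; summing over $k = 1, \ldots, m$ and telescoping yields
\[
  \mathtt{g}(Y_{ii}, X_{U_{i1}}) + \cdots + \mathtt{g}(Y_{ii}, X_{U_{im}}) \ \geq\ \mathtt{g}(Y_{ii}, X_{i1}, \ldots, X_{i(mn)}) + (m-1)\,\mathtt{g}(Y_{ii}).
\]

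Next I would estimate the two terms on the right. For the first, monotonicity gives $\mathtt{g}(Y_{ii}, X_{i1}, \ldots, X_{i(mn)}) \geq \mathtt{g}(X_{i1}, \ldots, X_{i(mn)})$, and by the mutual independence of those $mn$ sources this last quantity equals $mdn$, exactly as in the proof of Claim~\ref{clai}. For the second term I would use that every one of the $mn$ sources of $S_i$ is recovered at some terminal and that all information about those sources leaves the node $u_i$ only through its $m$ outgoing edges, which carry $Y_{ii}, Y_{i(m+1)}, \ldots, Y_{i(2m-1)}$; hence $\mathtt{g}(Y_{ii}, Y_{i(m+1)}, \ldots, Y_{i(2m-1)}) \geq mdn$. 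As this is a union of $m$ elements each of rank at most $dn$ (the edge capacity), subadditivity forces each of them, and in particular $\mathtt{g}(Y_{ii})$, to equal exactly $dn$. Substituting, the displayed bound becomes $\geq mdn + (m-1)dn = (2m-1)dn$, which is (\ref{claeq}); the argument is identical for every $i$.

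The step I expect to be the crux is establishing $\mathtt{g}(Y_{ii}) = dn$, that is, making rigorous the statement that all of $S_i$ is recovered through $Y_{Out(u_i)}$: in the polymatroid formalism this amounts to showing that adjoining the sources of $S_i$ to the collection of all messages on the edges leaving the $u$-nodes does not increase the rank, which is the polymatroidal analogue of a cut / data-processing argument, obtained by propagating condition $(4)$ through $v_i$ and $v_{m+1}, \ldots, v_{2m-1}$ to the terminals while using that $S_i$ is independent of the other source sets. Once that identity is available---it is a statement one would naturally isolate as a preliminary lemma, of the same flavour as the opening remark in the proof of Lemma~\ref{le1}---the remainder of the proof of Claim~\ref{cla2} is just the three rank axioms applied as above.
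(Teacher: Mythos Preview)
Your proposal is correct and follows essentially the same approach as the paper: first establish $\mathtt{g}(Y_{ii})=nd$, then apply $m$-fold submodularity (the paper cites it as ``$n$-way submodularity'' from \cite{rasala}, which is exactly your telescoping identity) together with the independence of the $mn$ sources in $S_i$ to obtain $(2m-1)nd$. The only notable difference is that the paper proves $\mathtt{g}(Y_{ii})=nd$ by a single global cut---$\mathtt{g}(\text{all }m^2n\text{ sources})\le \mathtt{g}(\text{all }m^2\text{ edges }Y_{\ast\ast})\le m^2 nd$, forcing every summand to equal $nd$---rather than your per-$i$ cut through $Out(u_i)$; the global version avoids having to invoke independence of $S_i$ from the other source blocks at that step.
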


\begin{proof}
To prove this result we first show that $\mathtt{g}(Y_{ii}) = \mathtt{g}(Y_{ij}) = nd$ for $1\leq i\leq m$ and $(m+1)\leq j\leq (2m-1)$.
\begin{IEEEeqnarray*}{ll}
m^2nd&\\
=\mathtt{g}(X_{11},X_{12},\ldots ,X_{1(mq)},&\ldots ,X_{m(mn)})\\
\leq \mathtt{g}(X_{11},X_{12},\ldots ,X_{1(mn)},&\ldots ,X_{m(mn)},Y_{11},\ldots ,Y_{mm},\\&Y_{1(m+1)},\ldots ,Y_{m(2m-1)})\\
= \mathtt{g}(Y_{11},\ldots ,Y_{mm},Y_{1(m+1)}&,\ldots ,Y_{m(2m-1)})\IEEEyesnumber\label{nq3}\\
\leq \mathtt{g}(Y_{11}) + \cdots + \mathtt{g}(Y_{mm}) + &\mathtt{g}(Y_{1(m+1)}) + \cdots + \mathtt{g}(Y_{m(2m-1)})\\
\leq mnd + m(m-1)nd&\IEEEyesnumber\label{nq4}\\
= m^2nd& 
\end{IEEEeqnarray*}
Equation~(\ref{nq3}) is true as there exists no source symbol which is not demanded by any terminal. Equation~(\ref{nq4}) comes from the fact that $\mathtt{g}(Y_{ii}) \leq nd$ and $\mathtt{g}(Y_{ij}) \leq nd$ for $1\leq i\leq m$ and \mbox{$(m+1)\leq j\leq (2m-1)$}. So $\mathtt{g}(Y_{11}) + \cdots + \mathtt{g}(Y_{mm}) + \mathtt{g}(Y_{1(m+1)}) + \cdots + \mathtt{g}(Y_{m(2m-1)}) = m^2nd$. This implies $\mathtt{g}(Y_{ii}) = \mathtt{g}(Y_{ij}) = nd$. \\

We now prove equation~(\ref{claeq}) when $X_{U_{ij}} = C_{ij}$ for $1\leq j\leq m$.
\begin{IEEEeqnarray*}{l}
\mathtt{g}(Y_{ii},C_{i1}) + \mathtt{g}(Y_{ii},C_{i2}) + \cdots + \mathtt{g}(Y_{ii},C_{im})\\
\geq \mathtt{g}(Y_{ii}, C_{i1}, C_{i2},\ldots ,C_{im}) + (m-1)Y_{ii}\\
= \mathtt{g}(C_{i1}, C_{i2},\ldots ,C_{im}) + (m-1)Y_{ii}\IEEEyesnumber\label{hq1}\\
= mnd + (m-1)nd \IEEEyesnumber\label{hq2}\\
= (2m-1)nd 
\end{IEEEeqnarray*}
The inequality in equation~(\ref{hq1}) is obtained by using the n-way submodularity given in \cite{rasala}. Equation~(\ref{hq2}) is true because $Y_{ii}$ can be computed from $C_{i1}, C_{i2},\ldots ,C_{im}$. For all other possible configurations of the sets $X_{U_{i1}},X_{U_{i2}},\ldots ,X_{U_{im}}$ the equations can be proved similarly.
\end{proof}

\begin{claim}\label{cla3}
If $X_{R}\subset \{X_{i1},X_{i2},\ldots,X_{i(mn)}\}$ and $|X_{R}| = n$, then for $1\leq i\leq m$, $\mathtt{g}(Y_{ii},X_{R}) \leq \frac{(2m-1)nd}{m}$.
\end{claim}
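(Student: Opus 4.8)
The plan is to argue with the rank function $\mathtt{g}=\rho\circ f$ of the representable discrete polymatroid attached to the assumed $(d,dn)$ fractional linear solution, using its submodularity and monotonicity, the polymatroidal‑network conditions (the $m^2n$ sources are mutually independent of rank $d$ each, every edge‑message has rank at most $l=dn$, and the messages on the edges leaving a node are determined by those entering it), the facts $\mathtt{g}(Y_{ii})=\mathtt{g}(Y_{ij})=nd$ already established inside the proof of Claim~\ref{cla2}, and Claims~\ref{cla1}--\ref{cla2} themselves. For a size‑$n$ set $Q$ of sources of block $i'$ put $b_{i'}(Q):=\mathtt{g}(Y_{i'i'})+\mathtt{g}(Q)-\mathtt{g}(Y_{i'i'},Q)=2nd-\mathtt{g}(Y_{i'i'},Q)\ge 0$ (the ``mutual rank''; here $\mathtt{g}(Q)=nd$ because the $n$ sources in $Q$ are independent of rank $d$ each). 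Then Claim~\ref{cla3} is exactly the inequality $b_i(X_R)\ge nd/m$, and I would prove it by combining a partition identity with a terminal inequality and a short pigeonhole step.

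\textbf{Partition identity.} For every block $i'$ fix a partition of its $mn$ sources into $m$ size‑$n$ blocks $X^{(i')}_{R_1},\dots,X^{(i')}_{R_m}$. Applying Claim~\ref{cla1} once for each $j\in\{1,\dots,m\}$ with $X_{P_{i'}}:=X^{(i')}_{R_j}$ and summing over $j$ yields $\sum_{i'}\sum_{j}\mathtt{g}(Y_{i'i'},X^{(i')}_{R_j})\le m(2m-1)nd$; since Claim~\ref{cla2} bounds each inner sum below by $(2m-1)nd$, every inner sum equals $(2m-1)nd$. Equivalently $\sum_{j=1}^{m}b_{i'}(X^{(i')}_{R_j})=2mnd-(2m-1)nd=nd$, so for every block $i'$ some size‑$n$ set $Q$ satisfies $b_{i'}(Q)\le nd/m$.

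\textbf{Terminal inequality.} Let $t$ be a terminal demanding a size‑$n$ set $Q^{(i')}$ from block $i'$ for every $i'$; I claim $\sum_{i'=1}^{m}b_{i'}(Q^{(i')})\ge nd$. Since the blocks are mutually independent and $Y_{i'i'}$ is a function of block $i'$ (node $u_{i'}$), one has, as in the proof of Claim~\ref{cla1}, $\sum_{i'}\mathtt{g}(Y_{i'i'},Q^{(i')})=\mathtt{g}\!\left(\{Y_{11},\dots,Y_{mm}\}\cup\bigcup_{i'}Q^{(i')}\right)$. The demanded sources $\bigcup_{i'}Q^{(i')}$ are recovered by $t$ from the messages $Z_{(1,t)},\dots,Z_{(2m-1,t)}$ on its incoming edges, so submodularity gives $\mathtt{g}\!\left(\{Y_{11},\dots,Y_{mm}\}\cup\bigcup_{i'}Q^{(i')}\right)\le \mathtt{g}\!\left(\{Y_{11},\dots,Y_{mm}\}\cup\{Z_{(k,t)}\}_{k=1}^{2m-1}\right)$. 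For $k\le m$ the message $Z_{(k,t)}$ is a function of $Y_{kk}$ (node $v_k$ receives only the edge carrying $Y_{kk}$), so the right side is at most $\mathtt{g}(Y_{11},\dots,Y_{mm})+\sum_{k=m+1}^{2m-1}\mathtt{g}(Z_{(k,t)})\le mnd+(m-1)nd=(2m-1)nd$. Hence $\sum_{i'}b_{i'}(Q^{(i')})=2mnd-\sum_{i'}\mathtt{g}(Y_{i'i'},Q^{(i')})\ge nd$.

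\textbf{Conclusion.} Fix $i$ and the given size‑$n$ set $X_R$ of block $i$. For each $i'\ne i$ use the partition identity to pick a size‑$n$ set $Q^{(i')}$ of block $i'$ with $b_{i'}(Q^{(i')})\le nd/m$, and take the terminal demanding $X_R$ from block $i$ and $Q^{(i')}$ from block $i'$ for $i'\ne i$. The terminal inequality then gives $b_i(X_R)\ge nd-(m-1)\cdot nd/m=nd/m$, i.e. $\mathtt{g}(Y_{ii},X_R)\le \frac{(2m-1)nd}{m}$. The step I expect to be the main obstacle is the terminal inequality: the delicate point is to condition on \emph{all} of $Y_{11},\dots,Y_{mm}$ simultaneously, so that the residual information about the demanded sources of \emph{every} block must pass through the same $m-1$ ``mixed'' edges $Z_{(m+1,t)},\dots,Z_{(2m-1,t)}$ feeding $t$; sharing that total budget $(m-1)nd$ among the $m$ blocks is precisely what produces the factor $1/m$, and is the M‑network phenomenon at the heart of the construction. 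It will also be worth checking cleanly the routine ingredients used above (recoverability of a terminal's demands from its incoming messages, additivity of $\mathtt{g}$ across the independent blocks, and $\mathtt{g}(Z_{(k,t)})\le dn$) directly from the definition of an $(r_1,\dots,r_{m^2n};l)$‑discrete polymatroidal network.
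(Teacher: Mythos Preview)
Your argument is correct. It rests on the same two ingredients as the paper's proof---Claim~\ref{cla1} (the terminal upper bound) and Claim~\ref{cla2} (the partition lower bound)---but you organise them more efficiently. The paper proceeds by an $(m{-}1)$-round elimination: at each round it sums $m$ instances of Claim~\ref{cla1} over a partition of one block, substitutes the Claim~\ref{cla2} lower bound to kill that block's terms, and passes to the next block; after all rounds only $\mathtt{g}(Y_{mm},C_{m1})$ survives with the desired bound. You instead observe once and for all that summing Claim~\ref{cla1} over the ``diagonal'' of $m$ simultaneous partitions gives a total $\le m(2m{-}1)nd$ that exactly matches the sum of the $m$ Claim~\ref{cla2} lower bounds, forcing equality in every Claim~\ref{cla2} instance; a single pigeonhole then supplies, for each block $i'\neq i$, a set $Q^{(i')}$ with $b_{i'}(Q^{(i')})\le nd/m$, and one more application of Claim~\ref{cla1} (your ``terminal inequality'') finishes. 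This is cleaner and makes transparent why the constant is $nd/m$: the total mutual rank $nd$ over any partition of a block is being split among $m$ parts. Two minor remarks: your ``terminal inequality'' is literally Claim~\ref{cla1} rewritten in terms of $b_{i'}$, so you need not re-derive it; and in the partition-identity step you should state explicitly (as you implicitly do) that you fix a partition of \emph{every} block, including block $i$, so that the $m$ applications of Claim~\ref{cla1} along the diagonal are available---the resulting equality in block $i$ is simply not used.
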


\begin{proof}
We show the proof of $\mathtt{g}(Y_{ii},X_{R}) \leq \frac{(2m-1)nd}{m}$ for $X_{R} = \{X_{m1},X_{m2},\ldots,X_{mn}\}$. For other possibilities of $X_{R}$ the claim can be proved similarly. Consider the following $m$ inequalities from equation~(\ref{mq1}) obtained by taking $X_{P_1} \in \{C_{11},C_{12},\ldots,C_{1m}\}$, and $X_{P_i} = C_{i1}$ for $2\leq i\leq m$.
\begin{IEEEeqnarray*}{l}
\mathtt{g}(Y_{11},C_{11}) {+} \mathtt{g}(Y_{22},C_{21}) {+} \cdots {+} \mathtt{g}(Y_{mm},C_{m1}) {\leq} (2m-1)nd\\
\mathtt{g}(Y_{11},C_{12}) {+} \mathtt{g}(Y_{22},C_{21}) {+} \cdots {+} \mathtt{g}(Y_{mm},C_{m1}) {\leq} (2m-1)nd\\
\qquad \qquad \quad\textsf{:}\qquad \qquad \quad  \textsf{:}  \qquad  \textsf{:}\qquad \qquad \qquad \quad \textsf{:}\\
\mathtt{g}(Y_{11},C_{1m}) {+} \mathtt{g}(Y_{22},C_{21}) {+} {\cdots} {+} \mathtt{g}(Y_{mm},C_{m1}) {\leq} (2m-1)nd
\end{IEEEeqnarray*}
By summing up these $m$ equations, we get:
\begin{multline}
\mathtt{g}(Y_{11},C_{11}) + \mathtt{g}(Y_{11},C_{12}) + \cdots + \mathtt{g}(Y_{11},C_{1m}) + \\m\{\mathtt{g}(Y_{22},C_{21}) + \mathtt{g}(Y_{33},C_{31}) + \cdots + \mathtt{g}(Y_{mm},C_{m1})\}\\ \leq m(2m-1)nd \label{www}
\end{multline}
From Claim~\ref{cla2}, we have $\mathtt{g}(Y_{11},C_{11}) + \mathtt{g}(Y_{11},C_{12}) + \cdots + \mathtt{g}(Y_{11},C_{1m}) \geq (2m-1)nd$. Substituting this in equation~(\ref{www}), we get:
\begin{multline}
m\{\mathtt{g}(Y_{22},C_{21}) + \mathtt{g}(Y_{33},C_{31}) + \cdots + \mathtt{g}(Y_{mm},C_{m1}) \} \\ \leq m(2m-1)nd - (2m-1)nd
\end{multline}
Similarly, considering the inequalities from equation~(\ref{mq1}) obtained by taking $X_{P_1} \in \{C_{11},C_{12},\ldots,C_{1m}\}$, $X_{P_2} = C_{22}$, and $X_{P_i} = C_{i1}$ for $3\leq i\leq m$; and then summing the $m$ equations, we can get the following:
\begin{multline}
m\{\mathtt{g}(Y_{22},C_{22}) + \mathtt{g}(Y_{33},C_{31}) + \cdots + \mathtt{g}(Y_{mm},C_{m1}) \} \\ \leq m(2m-1)nd - (2m-1)nd
\end{multline}
In the same manner it can be seen that for $1\leq i\leq m$,
\begin{multline}
m\{\mathtt{g}(Y_{22},C_{2i}) + \mathtt{g}(Y_{33},C_{31}) + \cdots + \mathtt{g}(Y_{mm},C_{m1}) \} \\ \leq m(2m-1)nd - (2m-1)nd \label{mu1}
\end{multline}
Note that in these $m$ equations, there is no term involving $\mathtt{g}(Y_{11},X_{P_1})$ for any $X_{P_1}$. Summing the $m$ equations in equation~(\ref{mu1}) we get:
\begin{multline}
m\{\mathtt{g}(Y_{22},C_{21}) + \mathtt{g}(Y_{22},C_{22}) + \cdots + \mathtt{g}(Y_{22},C_{2m}) \} + \\m^2\{\mathtt{g}(Y_{33},C_{31}) + \cdots + \mathtt{g}(Y_{mm},C_{m1}) \} \\ \leq m^2(2m-1)nd - m(2m-1)nd \label{www1}
\end{multline}
From equation~(\ref{claeq}), we have $\mathtt{g}(Y_{22},C_{21}) + \mathtt{g}(Y_{22},C_{22}) + \cdots + \mathtt{g}(Y_{22},C_{2m}) \geq (2m-1)nd$. Substituting this in equation~(\ref{www1}) we get:
\begin{multline}
m^2\{\mathtt{g}(Y_{33},C_{31}) + \cdots + \mathtt{g}(Y_{mm},C_{m1}) \} \\ \leq m^2(2m-1)nd - 2m(2m-1)nd
\end{multline}
Note again that the last equation has no terms involving either $\mathtt{g}(Y_{11},X_{P_1})$ or $\mathtt{g}(Y_{22},X_{P_2})$ for any $X_{P_1}$ and $X_{P_2}$. In this way after eliminating all terms involving $\mathtt{g}(Y_{ii},X_{P_i})$ for $1\leq i\leq m-1$, we get the following equation:
\begin{IEEEeqnarray*}{l}
m^{m-1}\mathtt{g}(Y_{mm},C_{m1}) \\ \leq m^{m-1}(2m-1)nd - (m-1)m^{m-2}(2m-1)nd\\
\Rightarrow \mathtt{g}(Y_{mm},C_{m1})\leq \frac{(2m-1)nd}{m}
\end{IEEEeqnarray*}

Now noting that $X_{R} = \{X_{m1},X_{m2},\ldots,X_{mn}\} = C_{m1}$,
\begin{IEEEeqnarray*}{l}
\mathtt{g}(Y_{mm},X_{R}) \leq \frac{(2m-1)nd}{m}
\end{IEEEeqnarray*}
\end{proof}
\begin{claim}\label{cla4}
For $1\leq i\leq m$, if $X_{V_i} \subset \{ X_{i1},X_{i2},\ldots,X_{i(mn)}\}$ and $|X_{V_i}| = y$, then for $1\leq y \leq n$, $\mathtt{g}(Y_{ii},X_{V_i}) \leq \frac{(nm + ym -y)d}{m}$.
\end{claim}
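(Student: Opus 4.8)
The plan is to reduce the statement to the single‑source case $y=1$ and then to establish that case by sharpening the elimination argument of Claim~\ref{cla3}.

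For the reduction, enumerate $X_{V_i}=\{x_1,\ldots,x_y\}$. Since $\mathtt{g}=\rho\circ f$ inherits monotonicity and submodularity from the rank function $\rho$, the incremental gains obey the diminishing‑returns inequality
\[
\mathtt{g}(Y_{ii},x_1,\ldots,x_k)-\mathtt{g}(Y_{ii},x_1,\ldots,x_{k-1})\ \le\ \mathtt{g}(Y_{ii},x_k)-\mathtt{g}(Y_{ii})
\]
for each $k$, because $\{Y_{ii}\}\subseteq\{Y_{ii},x_1,\ldots,x_{k-1}\}$. Summing over $k=1,\ldots,y$ and recalling that $\mathtt{g}(Y_{ii})=nd$ (established in the proof of Claim~\ref{cla2}) gives
\[
\mathtt{g}(Y_{ii},X_{V_i})\ \le\ nd+\sum_{k=1}^{y}\bigl(\mathtt{g}(Y_{ii},x_k)-nd\bigr).
\]
Hence it suffices to prove the $y=1$ bound $\mathtt{g}(Y_{ii},X_{ij})\le nd+\tfrac{(m-1)d}{m}$ for every source $X_{ij}\in S_i$: the displayed inequality then collapses to $\mathtt{g}(Y_{ii},X_{V_i})\le nd+\tfrac{y(m-1)d}{m}=\tfrac{(nm+ym-y)d}{m}$, which is the claim.

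For the $y=1$ bound I would first note that Claims~\ref{cla1}--\ref{cla3} already pin down the block values exactly: for a partition $C_{i1},\ldots,C_{im}$ of the sources of $S_i$, Claim~\ref{cla2} gives $\sum_k\mathtt{g}(Y_{ii},C_{ik})\ge(2m-1)nd$ while Claim~\ref{cla3} gives $\mathtt{g}(Y_{ii},C_{ik})\le\tfrac{(2m-1)nd}{m}$ for each $k$, whence $\mathtt{g}(Y_{ii},C_{ik})=\tfrac{(2m-1)nd}{m}$; and since every $n$‑element subset of $\{X_{i1},\ldots,X_{i(mn)}\}$ lies in some such partition, $\mathtt{g}(Y_{ii},X_R)=\tfrac{(2m-1)nd}{m}$ for \emph{every} $X_R$ of size $n$. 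The plan is then to feed this exact value, together with the $n$‑way submodularity bound of \cite{rasala} applied to the $n$ singletons of a block containing $X_{ij}$, into a refinement of the variable‑elimination scheme of Claim~\ref{cla3} that tracks single sources rather than whole blocks, so as to drive the marginal $\mathtt{g}(Y_{ii},X_{ij})-\mathtt{g}(Y_{ii})$ down to $\tfrac{(m-1)d}{m}$.

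This last step is the crux and the main obstacle. Performed verbatim with a singleton occupying the final slot, the elimination of Claim~\ref{cla3} is indifferent to the cardinality of the set attached to $Y_{ii}$ and only reproduces the weaker estimate $\mathtt{g}(Y_{ii},X_{ij})\le\tfrac{(2m-1)nd}{m}$; obtaining the sharp constant forces one to use structure left untapped in Claims~\ref{cla1}--\ref{cla3}, most notably the capacity bottleneck at the merge nodes $v_{m+1},\ldots,v_{2m-1}$, each of which carries only $nd$ symbols while being fed by the $m$ edges $e_{1j},\ldots,e_{mj}$ of dimension $nd$. Making that bottleneck interact with the way the representing subspace of $Y_{ii}$ splits along the source blocks is where I expect the real work to lie.
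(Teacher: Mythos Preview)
Your reduction from general $y$ to $y=1$ via diminishing returns is valid, but the proposal has a genuine gap: you never establish the $y=1$ bound, and the direction you sketch for closing it is a detour. You correctly observe that the naive singleton version of the Claim~\ref{cla3} elimination only yields $\mathtt{g}(Y_{ii},X_{ij})\le\tfrac{(2m-1)nd}{m}$, and you propose to get the sharp $\tfrac{(m-1)d}{m}$ marginal by bringing in the capacity bottleneck at the merge nodes $v_{m+1},\ldots,v_{2m-1}$. That extra structure is not needed, and it is not clear how it would help: the merge-node capacities have already been fully exploited in Claim~\ref{cla1} (they are what force the $(2m-1)nd$ ceiling there), and nothing further about the splitting of $Y_{ii}$ along source blocks is extractable from them.

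The paper's proof avoids this dead end by running the induction in the opposite direction: it starts from the base case $y=n$ supplied by Claim~\ref{cla3} and descends to smaller $y$. Concretely, assuming the bound for sets of size $z$, take $X_{V_i}$ of size $z-1$ and apply $n$-way submodularity to the $mn-z+1$ sets $\{Y_{ii}\}\cup X_{V_i}\cup\{X_{ik}\}$ with $X_{ik}\notin X_{V_i}$:
\[
\sum_{X_{ik}\notin X_{V_i}} \mathtt{g}(Y_{ii},X_{V_i},X_{ik}) \ \ge\ \mathtt{g}(Y_{ii},X_{i1},\ldots,X_{i(mn)}) + (mn-z)\,\mathtt{g}(Y_{ii},X_{V_i}).
\]
The left side has $mn-z+1$ terms, each bounded by the inductive hypothesis $\tfrac{(nm+zm-z)d}{m}$; the right side is $mnd+(mn-z)\,\mathtt{g}(Y_{ii},X_{V_i})$. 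A short algebraic rearrangement then gives exactly $\mathtt{g}(Y_{ii},X_{V_i})\le\tfrac{(nm+(z-1)m-(z-1))d}{m}$. So the whole claim, including the $y=1$ case you were stuck on, follows from Claim~\ref{cla3} plus submodularity alone, with no new structural input. Your forward reduction then becomes superfluous, since the backward induction already visits every $y\in\{1,\ldots,n\}$.
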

\begin{proof}
From Claim~(\ref{cla3}) it can be seen that the result is true for $y=n$. We show that if the result is true for an arbitrary value of $y = z$ $(2\leq z\leq n)$, it is also true for $y = z-1$. The proof for the case when $X_{V_i} = \{X_{i1},X_{i2},\ldots,X_{i(z-1)} \}$ is given below assuming that for $\forall y\geq z$, $\mathtt{g}(Y_{ii},X_{V_i}) \leq \frac{(nm + ym -y)d}{m}$ when $|X_{V_i}| = y$. For the rest of the possibilities of $X_{V_i}$ when $|X_{V_i}| = z-1$, the proof is similar.
\begin{IEEEeqnarray*}{l}
\mathtt{g}(Y_{ii},X_{V_i},X_{iz}) + \mathtt{g}(Y_{ii},X_{V_i},X_{i(z+1)}) + \cdots +\\ \mathtt{g}(Y_{ii},X_{V_i},X_{i(mn)})
\geq  \mathtt{g}(Y_{ii},X_{V_i},X_{iz},X_{i(z+1)},\ldots,X_{i(mn)}) \\  \qquad \qquad \qquad \qquad \qquad \qquad  +\> (mn-z) \mathtt{g}(Y_{ii},X_{V_i})\IEEEyesnumber\label{a11}\\
\text{or, } (mn{-}z{+}1)\frac{(nm {+} zm {-}z)d}{m} \geq mnd {+} (mn{-}z) \mathtt{g}(Y_{ii},X_{V_i})\\
\text{or, } (mn{-}z)\frac{(nm {+} zm {-}z)d}{m} + \frac{(nm {+} zm {-}z)d}{m} - mnd \\  \qquad \qquad \qquad \qquad \qquad \qquad \qquad \geq (mn{-}z)   \mathtt{g}(Y_{ii},X_{V_i})\\
\text{or, } (mn{-}z)\frac{(nm {+} zm {-}z {-}m {+} m {-}1 {+} 1)d}{m}  + \frac{(nm {+} zm {-}z)d}{m} \\ \qquad \qquad \qquad \qquad\qquad \qquad -\> mnd \geq (mn{-}z)   \mathtt{g}(Y_{ii},X_{V_i})\\
\text{or, } (mn{-}z)\frac{(nm {+} (z{-}1)m {-}(z{-}1) {+} m {-}1)d}{m}  \\ \qquad \qquad  -\> \frac{(m^2n { -}nm {-} zm {+}z)d}{m}  \geq (mn{-}z)   \mathtt{g}(Y_{ii},X_{V_i})\\
\text{or, } (mn{-}z)\frac{(nm {+} (z{-}1)m {-}(z{-}1))d}{m} + (mn{-}z)\frac{(m {-}1)d}{m} \\ \qquad \qquad  -\> \frac{(mn(m{-}1){-}z(m{-}1))d}{m}  \geq (mn{-}z)   \mathtt{g}(Y_{ii},X_{V_i})\\
\text{or, } (mn{-}z)\frac{(nm {+} (z{-}1)m {-}(z{-}1))d}{m} \geq (mn{-}z)   \mathtt{g}(Y_{ii},X_{V_i})\\
\text{or, } \frac{(nm {+} (z{-}1)m {-}(z{-}1))d}{m} \geq  \mathtt{g}(Y_{ii},X_{V_i})
\end{IEEEeqnarray*}
Equation~(\ref{a11}) comes from the n-way submodularity formula given in \cite{rasala}.
\end{proof}
For $y=1$ in Claim~(\ref{cla4}), we get: $\mathtt{g}(Y_{ii},X_{ij}) \leq \frac{(nm + m - 1)d}{m}$ for $1\leq i\leq m$, $1\leq j\leq mq$.\\Now using the n-way submodularity again,
\begin{IEEEeqnarray*}{l}
\mathtt{g}(Y_{ii},X_{i1}) + \mathtt{g}(Y_{ii},X_{i2}) + \cdots + \mathtt{g}(Y_{ii},X_{i(mn)})\\
\geq \mathtt{g}(Y_{ii},X_{i1},X_{i2},\ldots,X_{i(mn)}) + (mn-1)\mathtt{g}(Y_{ii})\\
\text{or, } \mathtt{g}(Y_{ii},X_{i1}) + \mathtt{g}(Y_{ii},X_{i2}) + \cdots + \mathtt{g}(Y_{ii},X_{i(mn)})
\\ \geq mnd + (mn-1)nd = (nm + m -1)nd \IEEEyesnumber\label{a22}
\end{IEEEeqnarray*}
Since there are $mn$ terms in the right hand side of equation~(\ref{a22}) and each term is less than or equal to $\frac{(nm + m - 1)d}{m}$, it must be that $\mathtt{g}(Y_{ii},X_{ij}) = \frac{(nm + m - 1)d}{m}$. Now note that $gcd(nm+m-1,m) = gcd(-1,m) = gcd(m-1,m) = gcd(m,1) = 1$. Since by definition, the rank function of a discrete polymatroid $\mathtt{g}(\,)$ is always an integer, $m$ must divide $d$ for $\mathtt{g}(Y_{ii},X_{ij})$ to be an integer.

We now show that the network $\mathcal{N}$ has an $(m,mn)$ fractional linear network code solution over any finite field by presenting an $(m,mn)$ routing solution. Note that an $(m,mn)$ routing solution is a special case of an $(m,mn)$ fractional linear network code solution. For any vector $X_{ij}$, $1\leq i\leq m$, $1\leq j\leq mn$, the first component is carried by $e_{ii}$, and the $p^{th}$ component for $2\leq p\leq m$ is carried by $e_{i(m-1+p)}$. Now the demands of all the terminals can be met by routing the appropriate symbols from the node $v_{i}$ for $1\leq i\leq 2m-1$. For example, the demands of the terminal $t_1$ gets fulfilled upon receiving the appropriate $n$ symbols of all vectors $Y_{ii}$ and $Y_{ij}$ for $1\leq i\leq m$, $m+1\leq j\leq 2m-1$. \qed

\section{Conclusion}\label{conclusion}
For non-multicast networks, it was shown that there exists networks which has an $(m,m)$ $(m\geq 2)$ vector linear network code solution, but has no $(w,w)$ vector linear network code solution for $w<m$. In this paper, we have generalized this result to show that for any positive integers $k$ and $n$, there exists a network which has no $(wk,wn)$ fractional linear network code solution for any $w<m$, but has an $(mk,mn)$ $(m\geq 2)$ fractional linear network code solution.

\end{document}